\newcommand{\norm}[1]{\Vert{#1}\Vert}
\newtheorem{theorem}{Theorem}[section]
\newtheorem{rem}{Remark}
\newtheorem{lemma}[theorem]{Lemma}
\newcommand{\M}[1]{\mathbf{#1}}
\newtheorem{algo}{Algorithm}
    \newtheorem{ass}{Assumption}
\newtheorem{coro}{Corollary}
\title{Fitting Low-rank Models on Egocentrically Sampled Partial Networks
}
\author{
  Ga Ming Angus Chan \\
  Department of Statistics \\
  University of Virginia \\
  \texttt{gc8ev@virginia.edu} \\
   \And
  Tianxi Li \\
  Department of Statistics \\
  University of Virginia \\
  \texttt{tianxili@virginia.edu} \\
}
\begin{document}
\newcommand\rank{\operatorname{rank}}
\newcommand\diag{\operatorname{diag}}
\maketitle

\begin{abstract}
  The statistical modeling of random networks has been widely used to uncover interaction mechanisms in complex systems and to predict unobserved links in real-world networks. In many applications, network connections are collected via egocentric sampling: a subset of nodes is sampled first, after which all links involving this subset are recorded; all other information is missing.  Compared with the assumption of ``uniformly missing at random", egocentrically sampled partial networks require specially designed modeling strategies. Current statistical methods are either computationally infeasible or based on intuitive designs without theoretical justification. Here, we propose an approach to fit general low-rank models for egocentrically sampled networks, which include several popular network models. This method is based on graph spectral properties and is computationally efficient for large-scale networks. It results in consistent recovery of missing subnetworks due to egocentric sampling for sparse networks. To our knowledge, this method offers the first theoretical guarantee for egocentric partial network estimation in the scope of low-rank models. We evaluate the technique on several synthetic and real-world networks and show that it delivers competitive performance in link prediction tasks. 

\end{abstract}

\section{INTRODUCTION}\label{sec:intro}

Massive network data that capture complicated dynamics and interactions in human society, the economy, ecosystems, and biology are now available \citep{goldenberg2010survey,newman2018networks}. The past 15 years have witnessed substantial progress in random network models within the statistics field. Associated efforts have provided countless model options to analyze network data with well-established theories \citep{hoff2002latent,bickel2009nonparametric,rohe2011spectral,gao2021minimax}. Drawing insights from complex networks is fundamental to many scientific challenges  \citep{kolaczyk2014statistical,newman2018networks}, such as understanding community structures \citep{karrer2011stochastic}, predicting new links \citep{liben2007link,zhao2017link}, and modeling peer effects in downstream tasks \citep{le2020linear}.

Missing data is a commonly encountered issue in data analysis \citep{little2019statistical} and plagues network problems, especially in social networks. Network connections are frequently obtained through surveys or sampling processes. One application of the network model, link prediction, is intrinsically embedded in missing data scenarios \citep{martinez2016survey,kumar2020link}. Moreover, the missingness in network problems often exhibits unique patterns and calls for specialized modeling strategies. 

We consider the missingness from \emph{egocentric sampling} in this paper. Egocentric sampling is a widely used mechanism for acquiring network data  \citep{bandiera2006social,ali2009estimating, alatas2016network, arnaboldi2013egocentric}. Under this approach, a subset of subjects is randomly sampled and all their connections are recorded. Any connections between subjects outside the sample are missing. \cite{handcock2010modeling} designs a model based on the exponential random graph model (ERGM), which can handle egocentrically missing data. However, their model fitting is not computationally feasible for moderately sized networks in general, though its computation can be improved in certain settings  \citep{krivitsky2017inference}. As we will show, this class of models is too restrictive to make effective link predictions. \cite{wu2018link} introduced an algorithm motivated by the CUR decomposition in computational mathematics \citep{Drineas2006a}, which is computationally efficient and demonstrates strong empirical performance on link prediction tasks. However, the underlying statistical model fitted by their method is unclear. Theoretical guarantees are not available for the aforementioned methods. The only family with known theoretical model-fitting correctness is the stochastic block model family, implicitly available from \cite{chen2018network}, including \cite{chandrasekhar2011econometrics} as a special case.

This paper proposes a method to estimate general low-rank random network models based on egocentrically sampled partial networks. Our technique can consistently estimate network models, even for sparse networks whose average node degree is sublinear in sample size. To our knowledge, this approach is the first to feature theoretical guarantees for general low-rank models on egocentrically sampled partial networks. Our results cover many special cases, most of which previously had no known model-fitting theory; examples include the setting of \citet{wu2018link}, along with the random dot product model  \citep{young2007random} and its generalization \citep{rubin2017statistical}. As an unexpected byproduct, our method provides a new insight into the method of \cite{wu2018link}: while \cite{wu2018link} motivated their algorithm by a CUR format, their method indeed fits a general low-rank structure. Additionally, our algorithm is based on the spectral decomposition of the partial network adjacency matrix, a technique that is extremely efficient for computation on large-scale networks. We empirically demonstrate that our approach displays competitive performance in dealing with link prediction problems.

\section{METHODOLOGY}

\subsection{Setup}

\paragraph{Notations.} We will use bold font capital letters, such as $\M{A}$, to denote a matrix. $\M{A}_{kk'}$ will be used to denote a submatrix of $\M{A}$ (to be defined later), while the element at the $i$th row and $j$th column of $\M{A}$ will be denoted by $A_{ij}$. Let $\M{A}^T$ and $\M{A}^+$ be the transpose and Moore-Penrose inverse of $\M{A}$, respectively. Furthermore, $\norm{\M{A}}_F$ is the Frobenius norm of $\M{A}$. For any positive integer $n$, we define $[n] = \{1, 2, \cdots, n\}$.

Let $N$ be the total number of nodes in the full network, indexed by $i=1, \cdots, n$. We can represent the network by its adjacency matrix $\M{A}\in\{0,1\}^{N\times N}$, where $A_{ij} = 1$ if and only if nodes $i$ and $j$ are connected in the network. We consider undirected and unweighted networks for presentation simplicity. In this case, we have $\M{A}^T = \M{A}$. In Section~\ref{secsec:extension}, we briefly discuss how to extend our method to handle more general networks. We will study the statistical properties under the so-called ``inhomogeneous Erd\"{o}s-Renyi framework". Specifically, we assume there exists a probability matrix $P\in [0,1]^{N\times N}$ such that
$$A_{ij} \sim \mathrm{Bernoulli}(P_{ij}),~~ i < j ~~~~\text{independently}.$$
This framework is arguably one of the most prominently applied for random network modeling. Under it, the assumed network structures are incorporated into matrix  $P$. We assume a low-rank model for our study and define  $K = \rank(P) \ll N$. The family of low-rank models includes many prevalent random network models such as the stochastic block model (SBM), the degree-corrected block model (DCBM) and their generalizations  \citep{holland1983stochastic,airoldi2008mixed,karrer2011stochastic, jin2017estimating,sengupta2018block,li2022hierarchical} as well as the random dot product model (RDPG) and its variants  \citep{young2007random, rubin2017statistical}. Indeed, as studied by \cite{chatterjee2015matrix}, most popular random network models are approximately low-rank.

When the network is only partially observed from egocentric sampling, suppose there are  $n$ nodes whose connections are fully observed. Without loss of generality, we can assume that these nodes are the first $n$ rows and columns in the adjacency matrix $\M{A}$. Consider the following block partition:
\begin{equation*}\label{matBlock}
    \M{A} = 
    \left(\begin{array}{cc}
    \mathbf{A}_{11}&\mathbf{A}_{12}\\
    \mathbf{A}_{21}&\textcolor{red}{\mathbf{A}_{22}}\\
    \end{array}\right),
\end{equation*}
where $\mathbf{A}_{11}\in\{0,1\}^{n\times n}$, $\mathbf{A}_{12}\in\{0,1\}^{n\times (N-n)}$, $\mathbf{A}_{21}=\mathbf{A}_{12}^\intercal$, and $\mathbf{A}_{22}\in\{0,1\}^{(N-n)\times(N-n)}$. Note that $\M{A}_{12} = \M{A}_{21}^T$. In our problem, $\mathbf{A}_{11}$, $\mathbf{A}_{12}$ and $\mathbf{A}_{21}$ are observed while $\mathbf{A}_{22}$ (in red) is missing.  We can partition $\M{P}$ in the same way
\begin{equation*}\label{matBlock_2}
    \M{P} = 
    \left(\begin{array}{cc}
    \mathbf{P}_{11}&\mathbf{P}_{12}\\
    \mathbf{P}_{21}&\mathbf{P}_{22}\\
    \end{array}\right).
\end{equation*}
The goal of model fitting is to recover $\M{P}$ from the observed blocks $\mathbf{A}_{11}$, $\mathbf{A}_{12}$ and $\mathbf{A}_{21}$. The core challenge is to recover $\M{P}_{22}$ for which no observations are available.

\subsection{Low-rank Estimation (LE) Algorithm}

A natural approach to the current problem is to use certain types of low-rank matrix completion \citep{Candes2010a,Plan2011}. However, as shown in \cite{wu2018link}, such methods can be slow and suffer from poor accuracy due to the egocentric missing pattern. \cite{wu2018link} employed an intuitive CUR decomposition based on the missing structure of  $\M{A}$ that efficiently computes a low-rank imputation with good empirical performance.  Yet, the exact reason for this technique’s success is unclear. We take a more principled approach by leveraging the low-rank structure precisely, leading to superior performance over \cite{wu2018link} while also providing theoretical guarantees.

Specifically, our algorithm is motivated by a self-consistency property studied by \citet{owen2009bi} for low-rank matrices.
\begin{lemma}[\citet{owen2009bi}]\label{lemma:owen}
For any $p\times q$ matrix $\M{M}$ with the partition
\begin{equation*}
    \M{M} = 
    \left(\begin{array}{cc}
    \mathbf{M}_{11}&\mathbf{M}_{12}\\
    \mathbf{M}_{21}&\mathbf{M}_{22}\\
    \end{array}\right),
\end{equation*}
Suppose $\rank(\M{M}_{11}) = \rank(\M{M})$, we have
    \begin{equation*}
        \mathbf{M}_{11}
        = \mathbf{M}_{12}\mathbf{M}_{22}^+\mathbf{M}_{21}.
    \end{equation*}
\end{lemma}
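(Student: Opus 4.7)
The plan is to exploit a rank factorization of $\M{M}$ that is compatible with the block partition, and then verify the stated identity by direct algebraic reduction. Let $r = \rank(\M{M})$ and write $\M{M} = \M{B}\M{C}$ with $\M{B} \in \mathbb{R}^{p\times r}$ of full column rank and $\M{C} \in \mathbb{R}^{r\times q}$ of full row rank. Partition $\M{B}$ row-wise and $\M{C}$ column-wise in accordance with the given $(11,12,21,22)$ block partition, so that each block factors as $\M{M}_{ij} = \M{B}_i \M{C}_j$ for $i,j \in \{1,2\}$.

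The core step is that the asserted identity $\M{M}_{11} = \M{M}_{12}\M{M}_{22}^+\M{M}_{21}$ becomes, under this factorization, the equation $\M{B}_1 \M{C}_1 = \M{B}_1 \M{C}_2 (\M{B}_2\M{C}_2)^+ \M{B}_2 \M{C}_1$, which collapses immediately provided $\M{B}_2$ has full column rank $r$ and $\M{C}_2$ has full row rank $r$: then $(\M{B}_2\M{C}_2)^+ = \M{C}_2^+ \M{B}_2^+$, and the inner products $\M{C}_2\M{C}_2^+$ and $\M{B}_2^+\M{B}_2$ are both the $r\times r$ identity, leaving $\M{B}_1\M{C}_1 = \M{M}_{11}$ on the right. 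The hypothesis powering that cancellation is precisely $\rank(\M{M}_{22}) = r$.

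The main obstacle is interpretive rather than computational: the hypothesis as worded on $\M{M}_{11}$ does not by itself yield $\rank(\M{M}_{22}) = r$ — the $2\times 2$ example $\M{M} = \diag(1,0)$ has $\rank(\M{M}_{11}) = \rank(\M{M}) = 1$ while the stated right-hand side $\M{M}_{12}\M{M}_{22}^+\M{M}_{21}$ vanishes, so the literal identity fails. I therefore read the lemma as the symmetric companion to Owen and Perry's self-consistency result, proved under the mirrored hypothesis $\rank(\M{M}_{22}) = \rank(\M{M})$; with that alignment the rank-factorization cancellation above closes the argument in a few lines. If one retains the hypothesis on $\M{M}_{11}$ as written, the same rank-factorization argument (now with $\M{B}_1$ and $\M{C}_1$ playing the full-rank roles) yields the symmetric conclusion $\M{M}_{22} = \M{M}_{21}\M{M}_{11}^+\M{M}_{12}$, which is exactly the form the estimator in the subsequent sections actually uses to impute the missing block.
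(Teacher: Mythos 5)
Your proposal is correct, and it actually does more than the paper, which never proves this lemma at all: the statement is imported by citation from Owen and Perry (2009), so your rank-factorization argument is a complete, self-contained proof where the paper supplies none. More importantly, you have correctly caught a transcription error in the statement itself. As displayed, the hypothesis $\rank(\M{M}_{11}) = \rank(\M{M})$ is paired with the conclusion $\M{M}_{11} = \M{M}_{12}\M{M}_{22}^+\M{M}_{21}$, and your counterexample $\M{M} = \diag(1,0)$ with $1\times 1$ blocks shows this literal pairing is false. The two valid pairings are the one in Owen and Perry's original result ($\rank(\M{M}_{22}) = \rank(\M{M})$ yields $\M{M}_{11} = \M{M}_{12}\M{M}_{22}^+\M{M}_{21}$) and its mirror, and it is the mirror --- $\rank(\M{M}_{11}) = \rank(\M{M})$ yields $\M{M}_{22} = \M{M}_{21}\M{M}_{11}^+\M{M}_{12}$ --- that the paper actually relies on: Step 3 of Algorithm~\ref{algo:LE} computes $\hat{\M{P}}_{22} = \M{A}_{12}^T\tilde{\M{P}}_{11}^+\M{A}_{12}$, the discussion immediately after the lemma speaks of recovering $\M{P}_{22}$ from $\M{P}_{11}$ and $\M{P}_{12}$ under $\rank(\M{P}_{11})=K$, and the first display in the appendix proof of Theorem~\ref{thm:frobenius} silently invokes $\M{P}_{22} = \M{P}_{21}\M{P}_{11}^+\M{P}_{12}$ under Assumption~\ref{ass:rank}. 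Your closing remark identifying this mirrored form as the one the estimator uses is exactly right.

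Your algebra is also sound at the one point where it could fail. The identity $(\M{B}_2\M{C}_2)^+ = \M{C}_2^+\M{B}_2^+$ is false for general products of matrices, but it holds precisely when $\M{B}_2$ has full column rank and $\M{C}_2$ has full row rank, and you correctly derive these conditions from the rank hypothesis rather than assuming them: since $\M{B}_2$ has $r$ columns and $\M{C}_2$ has $r$ rows, $\rank(\M{B}_2\M{C}_2) = r$ forces both factors to have rank exactly $r$, whence $\M{B}_2^+\M{B}_2 = \M{C}_2\M{C}_2^+ = \M{I}_r$ and the middle of the product collapses, leaving $\M{B}_1\M{C}_1 = \M{M}_{11}$. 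Running the identical cancellation with the index-$1$ and index-$2$ blocks swapped proves the mirrored identity, so your argument covers the version of the lemma that actually supports the LE estimator and its theoretical analysis.
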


As per this lemma, suppose $\rank(\M{P}_{11}) = K$. We can then exactly compute $\M{P}_{22}$ by $\M{P}_{11}$ and $\M{P}_{12}$. In practice, when we do not observe $\M{P}_{11}$ and $\M{P}_{12}$, it seems logical to take $\M{A}_{11}$ and $\M{A}_{12}$ as the ``plug-in" estimators of $\M{P}_{11}$ and $\M{P}_{12}$. Yet this naive approach will not work well for two reasons.  First, due to the binary nature, the Moore-Penrose inverse operator on $\M{A}_{11}$ is too noisy to be a good approximation. Second, directly using $\M{A}_{11}$ ignores the requirement of $\rank(\M{P}_{11}) = K$ needed in Lemma~\ref{lemma:owen}. An additional step is thus necessary to resolve the two issues simultaneously. In brief, we first take the optimal rank-$K$ approximation of $\M{A}_{11}$ as a smoothing step and then take the Moore-Penrose of the smoothed estimator. The full procedure is described in  Algorithm~\ref{algo:LE}.
\begin{algo}[LE imputation for the missing network]\label{algo:LE}
Given egocentrically sampled submatrices $\M{A}_{11}$, $\M{A}_{12}$, and the rank $K$, complete the following steps:
\begin{enumerate}
    \item Take the singular value decomposition of $\mathbf{A}_{11}=\mathbf{UDV}^T$, where $\M{D}$ is the diagonal matrix containing the singular values of $\M{A}_{11}$ in non-increasing order, and $\M{U}$ and $\M{V}$ are orthogonal matrices with each column being a singular vector.
    \item Compute $\tilde{\mathbf{P}}_{11}=\mathbf{U}_{(K)}\M{D}_{(K)}\M{V}_{(K)}^T$, in which $\mathbf{U}_{(K)}$ and $\mathbf{V}_{(K)}$ are the matrices of the first $K$ columns of $\M{U}$ and $\M{V}$, respectively, and $\M{D}_{(K)}$ is the diagonal matrix of only the first $K$ singular values. 
    \item Set $\hat{\mathbf{P}}_{22} = \mathbf{A}_{12}^T\tilde{\mathbf{P}}_{11}^+\mathbf{A}_{12}$. 
    \item (Optional) If a strict constraint of all entries within $[0,1]$ is desired, truncate all values of $\hat{\mathbf{P}}_{22}$ to $[0,1]$.
    \item Return $\hat{\mathbf{P}}_{22}$.
\end{enumerate}
\end{algo}

\subsection{Connections and New Insights to \cite{wu2018link}}

Algorithm~\ref{algo:LE} turns out to be closely connected to the subspace estimation (SE) method of \cite{wu2018link}. In particular, suppose we replace the SVD of Steps 1-2 in Algorithm~\ref{algo:LE} by the rank-$K$ approximation of $\M{A}_{\mathrm{obs}} = (\M{A}_{11}, \M{A}_{12})$ and then take the resulting first $n$ columns as $\hat{\M{P}}_{11}$. Using $\hat{\M{P}}_{11}$ instead of  $\tilde{\mathbf{P}}_{11}$ in other calculations would lead to the SE method. This connection is a bit surprising given that the SE method is originally designed for matrices with CUR decomposition (see their Theorem 1), a strict subset of low-rank matrices. Our current connection, therefore, reveals a new interpretation for the SE method -- it is fitting general low-rank models instead of CUR-form models. As we will show, the LE method comes with theoretical guarantees for its performance. However, our analysis could not be extended to the SE method. This is because the column extraction operation of $\hat{\M{P}}_{11}$ complicates the perturbation analysis. Our intuition is that such a step, though it uses more information from data (by taking $\M{A}_{\mathrm{obs}}$), also requires a strong signal-to-noise ratio. Our empirical experiments support this conjecture (see Section~\ref{sec:simulation}). We leave the theoretical analysis of the SE estimation for future work.

\subsection{Other Considerations}\label{secsec:extension}

\textbf{Model tuning by cross-validation.} 
The LE algorithm takes the rank $K$ as a tuning parameter. In practice, we can tune a proper $K$ via cross-validation as indicated in \cite{wu2018link}. Specifically, we randomly hold out $\rho$ proportion of the fully observed nodes, denoted by $V \subset [n]$ for validation. The remaining nodes $[n]\setminus V$ are treated as a smaller egocentrically sampled partial network, on which the LE algorithm is applied with a sequence of $K$. We then evaluate the link prediction performance of the LE algorithm with each $K$ value on the partial network associated with the hold-out set $V$. The $K$ value that achieves the highest area under the receiver operating characteristic (ROC) curve (i.e., AUC) is returned. This procedure is repeated multiple times, and we use the rounded average of the selected ranks as the final rank.

\textbf{Full matrix recovery.} 
Algorithm~\ref{algo:LE} only outputs the estimated $\hat{\M{P}}_{22}$. This is because imputing the missing subnetwork would be the most widely needed step in analyzing partial network data, and it would be the unique contribution of our paper. Estimation of the full matrix $\M{P}$, as a global model estimation, is also available. The other components $\M{P}_{11}$ and $\M{P}_{12}$ can be estimated together with the low-rank smoothing of the observed component $\M{A}_{\mathrm{obs}}$, by taking its rank-$K$ truncated SVD. Suppose the resulting matrix is $\tilde{\M{P}}_{\mathrm{obs}}$. We can take its last $N-n$ columns $\tilde{\M{P}}_{\mathrm{obs},(n+1):N}$ as $\hat{\M{P}}_{12}$ and the first $n$ column as  $\hat{\M{P}}_{11}$. Then due to symmetric properties, we set $\hat{\M{P}}_{21} = \hat{\M{P}}_{12}^T$. We can also force symmetry on $\hat{\M{P}}_{11}$ by taking $\frac{1}{2}(\hat{\M{P}}_{11}+\hat{\M{P}}_{11})^T$ as the estimator of $\M{P}_{11}$. Similar to Step 4 of Algorithm~\ref{algo:LE}, we can truncate the values of all estimators to $[0,1]$ as an optional step. Our theoretical properties hold in both cases. Combined with the output $\hat{\M{P}}_{22}$ of Algorithm~\ref{algo:LE}, the final estimation of the full matrix $\M{P}$ is given by
\begin{equation}\label{eq:full-matrix}
    \hat{\M{P}} = 
    \left(\begin{array}{cc}
    \hat{\mathbf{P}}_{11}&\hat{\mathbf{P}}_{12}\\
    \hat{\mathbf{P}}_{21}&\hat{\mathbf{P}}_{22}\\
    \end{array}\right).
\end{equation}
As we will show in the next section, $\hat{\M{P}}$ would be a consistent estimator of $\M{P}$. However, in practice, $\M{P}$ may not be an exactly rank-$K$ matrix. We do not think it is necessary to force the rank of $\hat{\M{P}}$ in most applications. However, if needed, one can force the rank value with the post-processing of SVD truncation.

\textbf{Extension to directed and weighted networks.}
While we focus on undirected and unweighted networks, the extension to directed and weighted networks can be naturally made. Notice that Lemma~\ref{lemma:owen} holds for general asymmetric matrices. $\M{A}$ and $\M P$ are no longer symmetric for a directed network. And the inhomogeneous Erd\"{o}s-Renyi model is still well defined, with all entries $A_{ij}$ being independently generated according to $P_{ij}$. Algorithm~\ref{algo:LE} can be exactly applied. Our theoretical results can be similarly derived. 

\section{THEORETICAL PROPERTIES OF THE LE ESTIMATOR}\label{sec:theory}

As mentioned, one major contribution of our work is the theoretical guarantee for general low-rank network model recovery from the partial network, which is not previously available. We now proceed to introduce our theoretical results. We first introduce additional notations and our regularity assumptions. Define $p^* = \max_{ij}P_{ij}$. For any matrix $\M{M}$, denote its $k$th largest singular value by $\sigma_k(\M{M})$. For two sequences $a_n$ and $b_n$, we will write $a_n \lesssim b_n$ if there exists a constant $C$ such that $a_n \le C b_n$. Correspondingly, we write $b_n \gtrsim a_n$ if $a_n \lesssim b_n$. Moreover, if $a_n \lesssim b_n$ and $b_n \lesssim a_n$, we write $a_n \sim b_n$.

We make the following assumptions. 
\begin{ass}[Low-rank recoverable]\label{ass:rank}
The rank of the model satisfies $\rank(\M{P}_{11}) = \rank(\M{P}) = K$. In particular, $K \lesssim \sqrt{\log n}$.
\end{ass}
\begin{ass}[Well-conditioned model]\label{ass:condition}
    $$np^* \sim \sigma_{K}(\M{P}_{11}) \le \sigma_{1}(\M{P}_{11}) \sim np^*$$
    $$Np^* \sim \sigma_{K}(\M{P}) \le \sigma_{1}(\M{P}) \sim Np^*$$
\end{ass}
Assumption~\ref{ass:rank} is strictly needed to ensure the validity of the low-rank recovery on the population matrix $\M{P}$ by Lemma~\ref{lemma:owen}. In contrast, assumption \ref{ass:condition} can be relaxed for better generality. However, we keep it in the current form for conciseness and interpretability of our error bound. 
\begin{rem}Both \ref{ass:rank} and \ref{ass:condition} are indeed motivated by the general sparse graphon model of \cite{bickel2009nonparametric}. The sparse graphon framework is arguably the most popular way to generate the $\M{P}$ in the inhomogeneous Erd\"{o}s-Renyi model \citep{klopp2017oracle,mukherjee2017clustering,lin2020theoretical,gao2021minimax}. In particular, it is not difficult to see that under the sparse graphon mechanism, if the true graphon does generate a low-rank $\M{P}$ and the observed nodes are uniformly sampled in the egocentric sampling, both \ref{ass:rank} and \ref{ass:condition} hold almost surely. In particular, the assumptions hold for the SBM, RDPG and their variants.
\end{rem}
\begin{rem}
Meanwhile, we also want to emphasize that we do not assume that observed subjects form a random sample out of all nodes. So our theory is still applicable in situations of nonuniform sampling or even informative sampling, as long as \ref{ass:condition} holds. This property is crucial for real-world applications. We will provide evidence supporting this claim in Section~\ref{sec:simulation}.
\end{rem}

\begin{theorem}\label{thm:frobenius}
Under assumptions~\ref{ass:rank} and \ref{ass:condition}, further assume that $np^*\gtrsim  \mathrm{log}(n)$ and $K \lesssim \sqrt{\log n}$ . If $\hat{\M P}_{22}$ is the estimator of $\M{P}_{22}$ produced by Algorithm~\ref{algo:LE}, we have  
\begin{equation}
    \norm{\hat{\M P}_{22}-\M P_{22}}_F \lesssim \sqrt{K}\left(\left(\frac{N}{n}\right)^{3/2}\sqrt{KNp^*} + \frac{N^2}{n^2}\log n\right)
\end{equation}
with probability tending to 1.
\end{theorem}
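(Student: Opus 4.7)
The starting point is Lemma~\ref{lemma:owen}: under Assumption~\ref{ass:rank} the population identity $\M{P}_{22} = \M{P}_{12}^T \M{P}_{11}^+ \M{P}_{12}$ holds, so $\hat{\M{P}}_{22}$ is exactly the plug-in for this identity with $\M{A}_{12}$ replacing $\M{P}_{12}$ and $\tilde{\M{P}}_{11}^+$ replacing $\M{P}_{11}^+$. The plan is to write
\begin{align*}
\hat{\M{P}}_{22} - \M{P}_{22}
&= \M{A}_{12}^T \bigl(\tilde{\M{P}}_{11}^+ - \M{P}_{11}^+\bigr) \M{A}_{12} \\
&\quad + (\M{A}_{12} - \M{P}_{12})^T \M{P}_{11}^+ \M{A}_{12} \\
&\quad + \M{P}_{12}^T \M{P}_{11}^+ (\M{A}_{12} - \M{P}_{12})
\end{align*}
and bound each piece in Frobenius norm. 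Because each summand carries at least one rank-$K$ factor ($\M{P}_{11}^+$, $\tilde{\M{P}}_{11}^+$, or $\M{P}_{12}$), every summand has rank at most $2K$, so $\|\cdot\|_F \le \sqrt{2K}\|\cdot\|_2$. This produces the outer $\sqrt{K}$ factor in the statement and reduces the problem to controlling spectral norms.

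Two probabilistic ingredients drive the spectral-norm estimates. First, sharp concentration for the centered Bernoulli noise: under $np^*\gtrsim\log n$ one has $\|\M{A}_{11}-\M{P}_{11}\|_2 \lesssim \sqrt{np^*}$ and $\|\M{A}_{12}-\M{P}_{12}\|_2 \lesssim \sqrt{Np^*}$ with probability $1-o(1)$, which follow from matrix Bernstein or Lei--Rinaldo style bounds. Second, a rank-preserving pseudoinverse perturbation bound. Since $\tilde{\M{P}}_{11}$ is the best rank-$K$ approximation of $\M{A}_{11}$ and $\M{P}_{11}$ is itself rank $K$, the triangle inequality gives $\|\tilde{\M{P}}_{11}-\M{P}_{11}\|_2 \le 2\|\M{A}_{11}-\M{P}_{11}\|_2 \lesssim \sqrt{np^*}$, and Weyl combined with Assumption~\ref{ass:condition} forces $\sigma_K(\tilde{\M{P}}_{11}) \sim np^*$. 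The Wedin-type bound then yields $\|\tilde{\M{P}}_{11}^+ - \M{P}_{11}^+\|_2 \lesssim \|\tilde{\M{P}}_{11}-\M{P}_{11}\|_2/\sigma_K(\M{P}_{11})^2 \lesssim 1/(np^*)^{3/2}$.

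Substituting: the first summand is submultiplicatively bounded by $\sqrt{K}\,\|\M{A}_{12}\|_2^2\,\|\tilde{\M{P}}_{11}^+-\M{P}_{11}^+\|_2 \lesssim \sqrt{K}(Np^*)^2/(np^*)^{3/2}$, whose simplified form $(N/n)^{3/2}\sqrt{KNp^*}$ is the dominant contribution (with the extra $\sqrt{K}$ inside the square root absorbed from the Frobenius-versus-spectral slack $\|\M{P}_{12}\|_F \le \sqrt{K}\|\M{P}_{12}\|_2$). The second and third summands each contribute $\sqrt{K}(N/n)\sqrt{Np^*}$, which is strictly smaller than the first. The additive $(N^2/n^2)\log n$ piece is the shadow of the subleading Bernstein tail in the spectral concentration bound, which becomes visible in the mildly sparse regime $np^* \asymp \log n$ and is amplified by the same $(np^*)^{-3/2}$ denominator that governs the leading term.

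The main obstacle is the pseudoinverse perturbation step: the rank-preserving bound requires a uniform lower bound on $\sigma_K(\tilde{\M{P}}_{11})$ that is only available on the high-probability concentration event, and the constants have to be tracked tightly enough that the $(np^*)^{3/2}$ denominator is not degraded. A secondary nuisance is the careful bookkeeping of the $\sqrt{K}$ factors across the three summands, combined with the subleading log contribution, in order to recover the exact bracketed form $\sqrt{K}\bigl((N/n)^{3/2}\sqrt{KNp^*} + (N^2/n^2)\log n\bigr)$ in the statement rather than a messier expression.
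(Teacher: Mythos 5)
Your proposal is correct in its overall skeleton and matches the paper's architecture at that level: both start from the identity $\M{P}_{22}=\M{P}_{21}\M{P}_{11}^+\M{P}_{12}$ of Lemma~\ref{lemma:owen}, telescope the plug-in difference into terms controlled by $\norm{\M{A}_{12}-\M{P}_{12}}$ and $\norm{\tilde{\M{P}}_{11}^+-\M{P}_{11}^+}$, invoke the Lei--Rinaldo concentration bound, and convert spectral to Frobenius norm via the rank-$O(K)$ observation. Where you genuinely diverge is the key technical step, the pseudoinverse perturbation. The paper expands $\tilde{\M{P}}_{11}^+-\M{P}_{11}^+$ through the eigendecompositions, applies a Davis--Kahan-type bound (Lemma~\ref{Davis}) to the eigenvector factor and the Athreya et al.\ alignment bound $\norm{\M{O}\M{\Sigma}-\tilde{\M{\Sigma}}\M{O}}_F\lesssim K^2+\log n$ (Lemma~\ref{lem:orthobound}) to the eigenvalue factor, obtaining $\norm{\tilde{\M{P}}_{11}^+-\M{P}_{11}^+}\lesssim (np^*)^{-1}\bigl(\sqrt{K/(np^*)}+\log n/(np^*)\bigr)$; this is exactly where the $\sqrt{K}$ inside the first bracket and the $(N/n)^2\log n$ term originate, and where the hypothesis $K\lesssim\sqrt{\log n}$ is consumed. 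You instead apply Wedin's pseudoinverse perturbation theorem directly, which is legitimate here because $\rank(\tilde{\M{P}}_{11})=\rank(\M{P}_{11})=K$ on the concentration event, and it gives the cleaner bound $(np^*)^{-3/2}$ with no $K$ or $\log n$ dependence. Your route is therefore shorter, needs neither Lemma~\ref{Davis} nor Lemma~\ref{lem:orthobound}, and in fact proves the slightly stronger bound $\sqrt{K}(N/n)^{3/2}\sqrt{Np^*}$, which implies the stated one. Two remarks in your write-up are misleading, though neither breaks the argument: the $(N^2/n^2)\log n$ term is not a ``Bernstein tail'' artifact --- in the paper it comes from the eigenvalue-alignment lemma, and under your Wedin route it simply never appears --- and the claim that an extra $\sqrt{K}$ is ``absorbed from $\norm{\M{P}_{12}}_F\le\sqrt{K}\norm{\M{P}_{12}}$'' is unnecessary bookkeeping, since your bound already sits inside the theorem's bracket without it.
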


The requirement for $K$ can be further relaxed with a more complicated error bound which we will not pursue. For illustration, consider the following two special cases
\begin{enumerate}
\item Suppose $n$ and $N$ are in the same order ($n\sim N$), we can see that the error bound on the missing network is in the order of $K\sqrt{np^*}+\sqrt{K}\log n$. Since $\norm{\M{P}_{22}}_F \sim n\sqrt{p^*}$, we know that $\norm{\hat{\M P}_{22}-\M P_{22}}_F/\norm{\M{P}_{22}}_F \to 0$ and the estimation consistency is guaranteed under the current assumptions.
\item Suppose $K$ is bounded and $np^* = \log^2 n$. Then the error bound is the order of $\frac{N^2}{n^2}\log n$. So $\norm{\hat{\M P}_{22}-\M P_{22}}_F/\norm{\M{P}_{22}}_F \to 0$ as long as $n \gg N^{4/5}$. Therefore, though our method allows the sampling proportion to be vanishing, the decaying rate has to be slow.
\end{enumerate}

The error bound for the full matrix recovery is a straightforward extension of Theorem~\ref{thm:frobenius} as follows.
\begin{coro}\label{coro:full}
Under the assumptions of Theorem~\ref{thm:frobenius}, for the full matrix estimator $\hat{\M{P}}$ defined in \eqref{eq:full-matrix}, we have  
$$\norm{\hat{\M P}-\M P}_F \lesssim \sqrt{K}\left(\left(\frac{N}{n}\right)^{3/2}\sqrt{KNp^*} + \frac{N^2}{n^2}\log n\right)$$
with high probability.
\end{coro}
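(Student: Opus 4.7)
The plan is to split $\norm{\hat{\M P}-\M P}_F^2$ into its four block contributions, reuse Theorem~\ref{thm:frobenius} for the missing block, and control the three observed blocks uniformly through a single low-rank perturbation argument on the rectangular observed matrix. Writing $\M A_{\mathrm{obs}}=(\M A_{11},\M A_{12})$ and $\M P_{\mathrm{obs}}=(\M P_{11},\M P_{12})$, and letting $\hat{\M P}_{\mathrm{obs}}=(\hat{\M P}_{11},\hat{\M P}_{12})$ denote the rank-$K$ truncated SVD of $\M A_{\mathrm{obs}}$, we have
\begin{equation*}
\norm{\hat{\M P}-\M P}_F^2 = \norm{\hat{\M P}_{\mathrm{obs}}-\M P_{\mathrm{obs}}}_F^2 + \norm{\hat{\M P}_{12}-\M P_{12}}_F^2 + \norm{\hat{\M P}_{22}-\M P_{22}}_F^2
\end{equation*}
after using $\hat{\M P}_{21}=\hat{\M P}_{12}^T$ and $\M P_{21}=\M P_{12}^T$. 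The last summand is handled directly by Theorem~\ref{thm:frobenius}.

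For the remaining terms, note that $\M P_{\mathrm{obs}}$ has rank at most $K$ by Assumption~\ref{ass:rank}, so $\hat{\M P}_{\mathrm{obs}}-\M P_{\mathrm{obs}}$ has rank at most $2K$. This gives
\begin{equation*}
\norm{\hat{\M P}_{\mathrm{obs}}-\M P_{\mathrm{obs}}}_F \le \sqrt{2K}\,\norm{\hat{\M P}_{\mathrm{obs}}-\M P_{\mathrm{obs}}},
\end{equation*}
and by the optimality of the truncated SVD together with Weyl's inequality,
\begin{equation*}
\norm{\hat{\M P}_{\mathrm{obs}}-\M P_{\mathrm{obs}}} \le \norm{\hat{\M P}_{\mathrm{obs}}-\M A_{\mathrm{obs}}} + \norm{\M A_{\mathrm{obs}}-\M P_{\mathrm{obs}}} \le 2\norm{\M A_{\mathrm{obs}}-\M P_{\mathrm{obs}}}.
\end{equation*}
A matrix concentration bound for inhomogeneous Bernoulli matrices (or a symmetric dilation followed by the same concentration already invoked in the proof of Theorem~\ref{thm:frobenius}) then yields $\norm{\M A_{\mathrm{obs}}-\M P_{\mathrm{obs}}} \lesssim \sqrt{Np^*}+\log n$ with probability tending to $1$ under $np^*\gtrsim\log n$, and hence $\norm{\hat{\M P}_{\mathrm{obs}}-\M P_{\mathrm{obs}}}_F \lesssim \sqrt{K}\bigl(\sqrt{Np^*}+\log n\bigr)$.

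Combining the pieces via $\norm{\hat{\M P}-\M P}_F \le \sqrt{2}\,\norm{\hat{\M P}_{\mathrm{obs}}-\M P_{\mathrm{obs}}}_F+\norm{\hat{\M P}_{22}-\M P_{22}}_F$, the observed-block contribution is absorbed into the bound of Theorem~\ref{thm:frobenius} because $N\ge n$ forces $(N/n)^{3/2}\sqrt{KNp^*}\ge\sqrt{Np^*}$ and $(N/n)^2\log n\ge\log n$. The optional symmetrization of $\hat{\M P}_{11}$ and entrywise truncation to $[0,1]$ do not increase the Frobenius error because $\M P_{11}$ is symmetric with entries in $[0,1]$, so the same bound transfers to the returned estimator. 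The only nontrivial step is the spectral concentration inequality for the rectangular matrix $\M A_{\mathrm{obs}}-\M P_{\mathrm{obs}}$, but this is standard and parallels the tool already used for the $n\times n$ block in Theorem~\ref{thm:frobenius}; the rest is routine block algebra together with a single union bound over the two high-probability events.
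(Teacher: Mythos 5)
Your proposal is correct and follows essentially the same route as the paper's own proof: split the Frobenius error into observed blocks plus the missing block, bound the observed part by $\sqrt{2K}$ times the spectral norm via the rank-$2K$ argument, use Eckart--Young optimality to get $\norm{\hat{\M P}_{\mathrm{obs}}-\M P_{\mathrm{obs}}}\le 2\norm{\M A_{\mathrm{obs}}-\M P_{\mathrm{obs}}}$, apply spectral concentration, and observe that this contribution is dominated by the $\hat{\M P}_{22}$ bound from Theorem~\ref{thm:frobenius}. The only cosmetic difference is that the paper bounds $\norm{\M A_{\mathrm{obs}}-\M P_{\mathrm{obs}}}$ by $\norm{\M A-\M P}\le C\sqrt{Np^*}$ as a submatrix of the full deviation rather than invoking a rectangular concentration bound or dilation, but this changes nothing in the final rate.
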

Corollary~\ref{coro:full} still has the same order error bound as  Theorem~\ref{thm:frobenius}. This indicates that the major estimation error for the full matrix $\M P$ is still on the unobserved component $\M P_{22}$. Because the components $\M P_{11}$ and $\M P_{12}$ are observed (with noises) and the model is low-rank. The recovery of these terms would be easy. The imputation of the unobserved subnetwork is the most challenging component of the problem, and it is our emphasized contribution. Similar to the previous illustration, the corollary indicates that if $N\sim n$, for example, the full matrix estimation is consistent.

\section{SYNTHETIC EXPERIMENTS}\label{sec:simulation}

In this section, we evaluate the proposed LE method in link prediction tasks to predict the unobserved subnetwork under several synthetic network models. Our implementation is based on R. We include all the egocentric-suitable methods mentioned in Section~\ref{sec:intro} as benchmarks as follows.

\begin{enumerate}
    \item Subspace estimation method (SE) \citep{wu2018link}. This method is based on the CUR decomposition of a matrix, as a special case of the low-rank model. The approach is effective for link prediction with computational scalability. It does not come with theory.
    \item LE+. This method is computed as the average of the LE and SE estimations. This simple hybrid will be adaptive to better one of the two methods according to individual scenarios.
    \item The neighborhood smoothing method (NS) for graphon estimation \citep{zhang2017estimating}. This method was originally studied under piecewise-smooth dense graphon models  \citep{bickel2009nonparametric}, and we adapt it according to the egocentric sampling. The piecewise-smooth graphon model family overlaps with some low-rank models, but the two model classes are not nested. Its theory in egocentrically sampled and sparse networks is unclear. The computation is polynomial in $N$ but not well suited to networks with more than a few thousand nodes. We use the Python implementation provided from \citet{wu2018link} for it.
    \item ERGM \citep{handcock2010modeling}. We use the model version with egocentric sampling and geometrically weighted edgewise shared partnerships introduced by  \cite{hunter2007curved}. This ERGM configuration works best in our evaluation.  The model is implemented in the R package \textit{ergm} \citep{ergm}. Note that the ERGM does not follow the inhomogeneous Erd\"{o}s-Renyi framework in general. This method is the slowest among all the benchmarks, and it does not come with a known theoretical guarantee.
    \item SBM model fitting. This approach is adapted to egocentric sampling given the model fitting strategy of \cite{chen2018network} based on spectral clustering \citep{lei2015consistency}. It is implemented in the R package \textit{randnet} \citep{li2021randnet}. The model fitting is known to be consistent if the true model is the SBM. This model has a high computation speed similar to that of LE (the proposed method) and SE.
    \item DCBM model fitting. Similar to the SBM, the model fitting strategy of \cite{chen2018network} is applied based on the R package \textit{randnet}.
\end{enumerate}

\subsection{Experiment Setup}

Following \citet{wu2018link}, synthetic networks with dimension $N=500$ are generated under four network models.
\begin{enumerate}
    \item The SBM: In this model, we first randomly assign $N$ nodes to $K$ groups or communities. Let $\M{Z}\in\{0,1\}^{N\times K}$ be a community label matrix, where $Z_{ik} = 1$ if and only if the node $i$ belongs to the $k$th community. With a matrix symmetric matrix $\M{B} \in [0,1]^{K\times K}$ as the ``community-connection" matrix, the probability matrix is $\M{P} = \M{Z}\M{B}\M{Z}^T$. We set $K=5$ in our experiments. This model represents an important benchmark scenario because it is strictly low-rank ($K=5$) and we do have a model-based method to fit it consistently via the SBM fitting. Therefore, LE's performance under this model gives a measure of its adaptivity to specific models.
    \item The DCBM: The out-in ratio ($\beta$), i.e., the ratio of between-block to within-block edges, is set to be the ratio of the average between-block edge probability to within-block edge probability as in the "community-connection" matrix $B$ in the SBM. Similarly, the number of communities $K=5$. The degree parameters follow the power-law distribution with $\alpha=0.1$ and lower bound $1$. 
    \item The RDPG (product model). Under this model, we first generate $Z_i \in [0,1]^K$ for each node $i\in [N]$. Each coordinate of $Z_i$ is generated independently from $\mathrm{Beta}(0.5,1)$, following the procedure of \citet{athreya2017statistical}. The connection probability between nodes $i$ and $j$ is given by $P_{ij}=Z_i^\top Z_j$. This model provides a more general low-rank scenario compared to the SBM, for which no known consistency guarantees were previously available.
    \item The latent space model (distance model). For each node $i \in [N]$, we generate its latent vector $Z_i\sim N(0,I_5)$. Then the connection probability between nodes $i$ and $j$ is set by $P_{ij}=[1+\mathrm{exp}(\norm{Z_i-Z_j})]^{-1}$ where $\norm{Z_i-Z_j}$ is the Euclidean distance between $Z_i$ and $Z_j$. The $\M{P}$ of the distance model is not low-rank. This model thus serves to test the potential to approximate full-rank models.
\end{enumerate}
The network generated from the above models may not be sparse. Therefore, after deriving the probability matrix  $\M{P}$, we further scale it to control the average expected degree of the generated networks. We focus on networks in our experiments with expected degrees of $20$ and $50$. 

We evaluate the methods based on the mean squared error (MSE) on the unobserved probability matrix
\[\mathrm{MSE}=\norm{\hat{\M P}_{22}-\M P_{22}}_F^2/(N-n)^2.\]
Additionally, the performance is also measured by the timing of model fitting, reflecting the computational efficiency. Since most of the methods involve tuning procedures for which the configuration and preference can vary across users, we do not include the tuning procedure in timing evaluation and only focus on the model-fitting procedure. We are also aware that the different methods are implemented differently: NS is implemented in Python; LE, SE, SBM and DCBM are implemented in R; the major component of ERGM fitting is implemented in C and called by R functions. Overall, we believe the comparison of NS, SE, LE, and SBM is fair, while the ERGM, if implemented similarly, would be even more inferior in speed comparison. All experiments are independently repeated 100 times.

\subsection{Evaluation under Missing Completely At Random}

For egocentric missingness with missing-completely-at-random (MCAR),  we randomly sample $\rho = \{0.1,0.2,0.5,0.9\}$ of the nodes as observed while the rest as missing. We will investigate the performance of all methods in configurations of different synthetic models, expected degrees and $\rho$.

\subsubsection{Performance Evaluation}

\paragraph{Link Prediction Accuracy.}
Table~\ref{MSE-sim} displays the prediction errors under the SBM, DCBM, product, and distance models, respectively, with various levels of sampling proportions and sparsity levels.

Under the SBM model, the SBM model-fitting is generally the best method since it fits the correct model. LE, SE, LE+ and NS all have similar performances and are inferior to the SBM. LE is generally more accurate than SE, and its advantage increases with the sampling proportion $\rho$. The only case when SE is better is when $\rho$ is very low (0.1) while the network is dense (50). This coincides with our theoretical conjecture. LE+ can be better or similar to the better one of the two.

Under the DCBM model, LE+ is generally the best method, while LE and SE have similar performance. The comparison between LE, SE and LE+ remains similar. A surprising result is that DCBM model fitting does not produce the best result and is inferior to  SBM model fitting under low-degree sparse networks.

Under the product and distance models, LE and LE+ are essentially the best ones. These experiments demonstrate the effectiveness of the proposed estimator to fit general low-rank structures and even beyond that.

One interesting fact may be that SBM performs well under sparse networks (Deg.$=20$) regardless of the generating model. This may be explained by its parsimony, reducing the variance in estimation when little information is available from the observed data.

\begin{table}[!ht]
\centering
\caption{\label{MSE-sim}MSE of Link Prediction Performance on Synthetic Networks ($10^{-3}$).}
\resizebox{0.65\textwidth}{!}{
\begin{tabular}{| c | c | c c c c c c c |}
\hline
Model&($\rho$,Deg.)&LE&SE&LE+&NS&ERGM&SBM&DCBM\\
\hline
\multirow{16}{*}{SBM}&\multirow{2}{*}{(0.1,20)}&4.81&5.33&\textbf{4.19}&7.79&9.15&6.43&8.16\\
&&(0.139)&(0.228)&(0.101)&(0.098)&(0.056)&(0.467)&(0.37)\\
\cline{2-9}
&\multirow{2}{*}{(0.1,50)}&14.4&12.5&11.2&17.7&43.1&\textbf{6.71}&13.8\\
&&(0.246)&(0.236)&(0.149)&(0.285)&(0.1)&(0.161)&(0.168)\\
\cline{2-9}
&\multirow{2}{*}{(0.2,20)}&2.87&2.88&2.44&4.51&9.23&\textbf{2.41}&3.26\\
&&(0.043)&(0.049)&(0.031)&(0.037)&(0.036)&(0.043)&(0.046)\\
\cline{2-9}
&\multirow{2}{*}{(0.2,50)}&7.44&7.6&6.96&10.2&42.8&\textbf{3.03}&8.25\\
&&(0.105)&(0.087)&(0.069)&(0.165)&(0.069)&(0.054)&(0.133)\\
\cline{2-9}
&\multirow{2}{*}{(0.5,20)}&1.35&1.52&1.35&2.04&9.8&\textbf{0.763}&1.96\\
&&(0.016)&(0.017)&(0.014)&(0.021)&(0.026)&(0.02)&(0.021)\\
\cline{2-9}
&\multirow{2}{*}{(0.5,50)}&3.54&4.02&3.64&5.1&43.4&\textbf{0.823}&3.4\\
&&(0.027)&(0.05)&(0.035)&(0.087)&(0.129)&(0.039)&(0.077)\\
\cline{2-9}
&\multirow{2}{*}{(0.9,20)}&1.05&1.43&1.27&1.24&10.5&\textbf{0.548}&1.81\\
&&(0.018)&(0.041)&(0.033)&(0.021)&(0.108)&(0.027)&(0.047)\\
\cline{2-9}
&\multirow{2}{*}{(0.9,50)}&2.18&3.82&2.87&1.78&45.5&\textbf{0.584}&2.6\\
&&(0.026)&(0.08)&(0.044)&(0.035)&(0.561)&(0.043)&(0.078)\\
\hline
\multirow{16}{*}{DCBM}&\multirow{2}{*}{(0.1,20)}&7.2&8.38&\textbf{6.21}&8.46&10.2&9.84&10.4\\
&&(0.14)&(0.231)&(0.138)&(0.084)&(0.051)&(0.67)&(0.495)\\
\cline{2-9}
&\multirow{2}{*}{(0.1,50)}&20.4&16.4&14.7&21.9&50.2&\textbf{13.6}&16.3\\
&&(0.229)&(0.208)&(0.131)&(0.172)&(0.112)&(0.183)&(0.225)\\
\cline{2-9}
&\multirow{2}{*}{(0.2,20)}&4.37&4.01&\textbf{3.43}&5.06&10.3&3.67&4.23\\
&&(0.043)&(0.045)&(0.028)&(0.04)&(0.034)&(0.065)&(0.043)\\
\cline{2-9}
&\multirow{2}{*}{(0.2,50)}&9.12&8.67&8.1&14.6&50.3&\textbf{7.51}&7.91\\
&&(0.139)&(0.076)&(0.079)&(0.135)&(0.055)&(0.183)&(0.152)\\
\cline{2-9}
&\multirow{2}{*}{(0.5,20)}&\textbf{1.68}&1.98&1.69&2.76&11.1&1.43&1.98\\
&&(0.022)&(0.024)&(0.018)&(0.023)&(0.032)&(0.032)&(0.027)\\
\cline{2-9}
&\multirow{2}{*}{(0.5,50)}&3.39&3.69&3.46&8.93&50.6&5.95&\textbf{3.08}\\
&&(0.028)&(0.035)&(0.036)&(0.107)&(0.175)&(0.093)&(0.148)\\
\cline{2-9}
&\multirow{2}{*}{(0.9,20)}&\textbf{1.04}&1.44&1.23&1.93&13.1&1.21&1.43\\
&&(0.023)&(0.043)&(0.028)&(0.051)&(0.16)&(0.06)&(0.07)\\
\cline{2-9}
&\multirow{2}{*}{(0.9,50)}&\textbf{2.25}&2.65&2.53&4.68&54.9&7.73&2.82\\
&&(0.052)&(0.061)&(0.051)&(0.196)&(0.72)&(0.356)&(0.098)\\
\hline
\multirow{16}{*}{product}&\multirow{2}{*}{(0.1,20)}&3&3.16&\textbf{2.37}&6.61&7.15&8.48&6.96\\
&&(0.136)&(0.125)&(0.085)&(0.079)&(0.058)&(0.646)&(0.297)\\
\cline{2-9}
&\multirow{2}{*}{(0.1,50)}&5.63&5.65&\textbf{5.12}&13.6&33.1&6.49&6.75\\
&&(0.049)&(0.084)&(0.049)&(0.126)&(0.181)&(0.213)&(0.104)\\
\cline{2-9}
&\multirow{2}{*}{(0.2,20)}&1.17&1.23&\textbf{1.06}&3.44&7.37&1.77&2.03\\
&&(0.011)&(0.019)&(0.011)&(0.025)&(0.041)&(0.105)&(0.049)\\
\cline{2-9}
&\multirow{2}{*}{(0.2,50)}&3.33&3.39&\textbf{3.24}&7.84&33.1&4.7&4.16\\
&&(0.02)&(0.033)&(0.024)&(0.046)&(0.103)&(0.029)&(0.031)\\
\cline{2-9}
&\multirow{2}{*}{(0.5,20)}&\textbf{0.582}&0.67&0.592&1.54&7.44&0.855&0.968\\
&&(0.005)&(0.01)&(0.006)&(0.01)&(0.029)&(0.01)&(0.013)\\
\cline{2-9}
&\multirow{2}{*}{(0.5,50)}&\textbf{1.99}&2.09&\textbf{2}&4.74&33.2&3.86&3.06\\
&&(0.013)&(0.021)&(0.017)&(0.038)&(0.065)&(0.021)&(0.023)\\
\cline{2-9}
&\multirow{2}{*}{(0.9,20)}&\textbf{0.413}&0.665&0.478&1.09&7.44&0.988&1.03\\
&&(0.007)&(0.021)&(0.008)&(0.014)&(0.072)&(0.027)&(0.033)\\
\cline{2-9}
&\multirow{2}{*}{(0.9,50)}&\textbf{1.6}&2&1.73&2.9&33.8&3.82&3.07\\
&&(0.024)&(0.041)&(0.026)&(0.042)&(0.275)&(0.062)&(0.055)\\
\hline
\multirow{16}{*}{distance}&\multirow{2}{*}{(0.1,20)}&3.85&4.53&\textbf{3.38}&7.2&7.82&9.01&8.32\\
&&(0.137)&(0.213)&(0.124)&(0.086)&(0.061)&(0.651)&(0.378)\\
\cline{2-9}
&\multirow{2}{*}{(0.1,50)}&9.87&9.62&\textbf{9.08}&17.5&36.4&10.1&10.7\\
&&(0.078)&(0.092)&(0.06)&(0.124)&(0.15)&(0.203)&(0.124)\\
\cline{2-9}
&\multirow{2}{*}{(0.2,20)}&1.87&1.86&\textbf{1.73}&4.19&7.86&2.22&2.61\\
&&(0.017)&(0.02)&(0.014)&(0.03)&(0.041)&(0.05)&(0.035)\\
\cline{2-9}
&\multirow{2}{*}{(0.2,50)}&7.49&7.45&\textbf{7.18}&11.6&36.7&8.47&8.3\\
&&(0.028)&(0.045)&(0.026)&(0.051)&(0.093)&(0.03)&(0.044)\\
\cline{2-9}
&\multirow{2}{*}{(0.5,20)}&1.34&1.38&\textbf{1.31}&2.26&8.05&1.53&1.73\\
&&(0.005)&(0.008)&(0.006)&(0.011)&(0.027)&(0.01)&(0.012)\\
\cline{2-9}
&\multirow{2}{*}{(0.5,50)}&6.22&6.34&\textbf{5.8}&8.72&37.7&8.17&7.35\\
&&(0.026)&(0.026)&(0.02)&(0.041)&(0.054)&(0.031)&(0.029)\\
\cline{2-9}
&\multirow{2}{*}{(0.9,20)}&\textbf{2.13}&2.23&2.18&2.78&9.69&2.53&2.61\\
&&(0.011)&(0.02)&(0.016)&(0.016)&(0.07)&(0.017)&(0.017)\\
\cline{2-9}
&\multirow{2}{*}{(0.9,50)}&9.49&9.5&\textbf{9.1}&12.7&46.1&13.4&12\\
&&(0.049)&(0.054)&(0.047)&(0.051)&(0.266)&(0.073)&(0.062)\\
\hline
\end{tabular}
}
\end{table}

\paragraph{Timing Comparison.} Table~\ref{runtime-sim} summarizes the average computing time of all the methods under evaluation. The ERGM is by far the slowest and generally not applicable for networks with over a few hundred nodes. NS is feasible but is still slower than the other methods. The comparison between LE, SE, SBM and DCBM can be different across network configurations, but overall, they are all comparably efficient in speed.

\begin{table}[!ht]
\centering
\caption{\label{runtime-sim}Timing Comparison between Benchmark Methods on Synthetic Networks (In Milliseconds).}
\resizebox{0.65\textwidth}{!}{
\begin{tabular}{| c | c | c c c c c c c |}
\hline
Model&($\rho$,Deg.)&LE&SE&LE+&NS&ERGM&SBM&DCBM\\
\hline
\multirow{16}{*}{SBM}&\multirow{2}{*}{(0.1,20)}&7.09&9.39&9.32&134&27000&36.9&46\\
&&(2.5)&(2.97)&(1.71)&(0.176)&(147)&(1.1)&(2)\\
\cline{2-9}
&\multirow{2}{*}{(0.1,50)}&3.78&7.55&11.2&133&35000&13.2&31.3\\
&&(0.108)&(2.4)&(2.42)&(0.0811)&(248)&(0.387)&(0.744)\\
\cline{2-9}
&\multirow{2}{*}{(0.2,20)}&10.9&7.55&12&150&21400&37.1&49.4\\
&&(3.07)&(1.63)&(0.144)&(0.0346)&(127)&(1.1)&(2.09)\\
\cline{2-9}
&\multirow{2}{*}{(0.2,50)}&9.66&11.8&15.2&151&30100&13.6&31.7\\
&&(2.62)&(3.08)&(1.65)&(0.0678)&(183)&(1.57)&(0.913)\\
\cline{2-9}
&\multirow{2}{*}{(0.5,20)}&24.3&28.6&46.9&197&11400&19.1&52.2\\
&&(2.9)&(4.05)&(2.24)&(0.0479)&(114)&(1.52)&(1.14)\\
\cline{2-9}
&\multirow{2}{*}{(0.5,50)}&26.8&25.6&50.8&197&21400&19.8&25.6\\
&&(2.98)&(3.08)&(1.99)&(0.0921)&(168)&(2.19)&(1.63)\\
\cline{2-9}
&\multirow{2}{*}{(0.9,20)}&78.2&72.6&197&307&8080&25.2&79.2\\
&&(3.38)&(2.88)&(3.84)&(0.706)&(88.4)&(1.59)&(1.44)\\
\cline{2-9}
&\multirow{2}{*}{(0.9,50)}&72.6&79.2&212&311&18700&28.8&36.5\\
&&(1.76)&(3.67)&(4.77)&(0.109)&(193)&(0.69)&(0.915)\\
\hline
\multirow{16}{*}{DCBM}&\multirow{2}{*}{(0.1,20)}&7.56&4.75&11.9&139&26700&46.5&48.9\\
&&(2.53)&(0.0665)&(2.63)&(0.261)&(133)&(2.04)&(0.981)\\
\cline{2-9}
&\multirow{2}{*}{(0.1,50)}&10.1&5.81&9.23&138&35600&18.1&34.5\\
&&(3.31)&(1.73)&(1.77)&(0.109)&(216)&(0.385)&(0.811)\\
\cline{2-9}
&\multirow{2}{*}{(0.2,20)}&8.16&6.71&18.1&148&21200&43.7&52.1\\
&&(1.73)&(0.141)&(2.92)&(0.219)&(129)&(1.52)&(2.2)\\
\cline{2-9}
&\multirow{2}{*}{(0.2,50)}&5.75&7.57&16.6&147&30500&17.4&36.6\\
&&(0.115)&(1.76)&(2.9)&(0.166)&(207)&(0.281)&(1.68)\\
\cline{2-9}
&\multirow{2}{*}{(0.5,20)}&20.9&24.8&44.6&185&11200&21.2&49.5\\
&&(2.78)&(3.23)&(1.67)&(0.257)&(121)&(0.3)&(1.96)\\
\cline{2-9}
&\multirow{2}{*}{(0.5,50)}&20.9&18&51.2&185&21600&26&27.8\\
&&(2.87)&(0.136)&(3.81)&(0.202)&(162)&(1.52)&(0.732)\\
\cline{2-9}
&\multirow{2}{*}{(0.9,20)}&67.3&68.9&185&251&8180&22&68.5\\
&&(2.44)&(2.52)&(3)&(0.485)&(91.3)&(1.58)&(2.99)\\
\cline{2-9}
&\multirow{2}{*}{(0.9,50)}&68.9&72.5&196&249&18900&64.4&43.4\\
&&(1.31)&(2.97)&(4.17)&(0.254)&(175)&(1.82)&(2.38)\\
\hline
\multirow{16}{*}{product}&\multirow{2}{*}{(0.1,20)}&8.57&5.33&7.22&136&26900&48.6&55.8\\
&&(3.03)&(1.57)&(0.0859)&(0.158)&(143)&(1.07)&(2.09)\\
\cline{2-9}
&\multirow{2}{*}{(0.1,50)}&3.25&8.62&12.1&137&35700&37.2&42.7\\
&&(0.0948)&(2.98)&(2.94)&(0.178)&(264)&(0.875)&(3.35)\\
\cline{2-9}
&\multirow{2}{*}{(0.2,20)}&10.5&7.26&11.3&147&21600&54.6&59.9\\
&&(3.08)&(1.68)&(0.146)&(0.186)&(127)&(1.39)&(2.33)\\
\cline{2-9}
&\multirow{2}{*}{(0.2,50)}&6.7&8.75&14.6&147&30800&54&42.2\\
&&(1.71)&(2.43)&(2.55)&(0.12)&(227)&(0.635)&(0.906)\\
\cline{2-9}
&\multirow{2}{*}{(0.5,20)}&23.1&26.2&44.9&180&11100&63.5&74.2\\
&&(2.96)&(3.81)&(2.36)&(0.431)&(108)&(1.77)&(2.31)\\
\cline{2-9}
&\multirow{2}{*}{(0.5,50)}&19.5&17.9&44.9&180&21900&80.8&69.2\\
&&(1.74)&(1.65)&(2.38)&(0.303)&(172)&(0.668)&(2.95)\\
\cline{2-9}
&\multirow{2}{*}{(0.9,20)}&66.6&65.8&188&247&7690&102&98.6\\
&&(2.65)&(2.81)&(3.16)&(0.393)&(94.3)&(1.98)&(2.67)\\
\cline{2-9}
&\multirow{2}{*}{(0.9,50)}&66.3&64.2&187&246&19200&93.6&76.6\\
&&(2.28)&(2.1)&(2.96)&(0.318)&(170)&(0.613)&(2.34)\\
\hline
\multirow{16}{*}{distance}&\multirow{2}{*}{(0.1,20)}&4.85&5.9&7.03&138&27000&55&55.1\\
&&(1.47)&(1.74)&(0.164)&(0.249)&(133)&(2.12)&(0.827)\\
\cline{2-9}
&\multirow{2}{*}{(0.1,50)}&3.22&8.28&7.65&137&35600&42.8&36.7\\
&&(0.0602)&(2.66)&(1.49)&(0.095)&(226)&(0.901)&(0.814)\\
\cline{2-9}
&\multirow{2}{*}{(0.2,20)}&5.14&7.49&16.2&147&21500&64.9&58\\
&&(0.0914)&(1.68)&(2.93)&(0.229)&(142)&(2.75)&(2.06)\\
\cline{2-9}
&\multirow{2}{*}{(0.2,50)}&5.05&5.45&16.2&148&30600&61.4&44.3\\
&&(0.0567)&(0.0868)&(3.1)&(0.147)&(240)&(0.788)&(0.976)\\
\cline{2-9}
&\multirow{2}{*}{(0.5,20)}&18&18.7&46.4&183&11100&69&68.5\\
&&(0.129)&(1.61)&(3.67)&(0.37)&(103)&(2.52)&(1.26)\\
\cline{2-9}
&\multirow{2}{*}{(0.5,50)}&22.5&20.4&45.1&183&21800&91.1&67.9\\
&&(2.56)&(2.27)&(3.47)&(0.295)&(166)&(0.851)&(2.07)\\
\cline{2-9}
&\multirow{2}{*}{(0.9,20)}&68.3&67.1&164&250&7700&110&96.5\\
&&(2.65)&(0.687)&(5.2)&(0.561)&(91.2)&(2.73)&(1.53)\\
\cline{2-9}
&\multirow{2}{*}{(0.9,50)}&68.9&74.4&158&246&19100&108&72.6\\
&&(2.47)&(3.76)&(2.9)&(0.283)&(174)&(0.75)&(1.94)\\
\hline
\end{tabular}
}
\end{table}

\paragraph{Summary.} Overall, the LE method renders competitive accuracy in all settings even when the model is full-rank (but can be approximated by low-rank structures). The closely related SE method is inferior to LE, except in low-sampling-high-density cases. LE+ can generally achieve adaptive performance by combining LE and SE. 

\subsection{Evaluation under Missing Not At Random}

All methods are also evaluated under the missing-not-at-random (MNAR) scenario, where the sampling probability of a node depends on its degree (in $\M{A}$). Two settings are evaluated: the positive setting ($+$), where nodes with higher degrees are more likely to be sampled; and the negative setting ($-$), where nodes with lower degrees are more likely to be sampled. Based on the adjacency matrix $A$, the nodes are divided into three groups  by the descending order of their node degrees by proportions $33\%,34\%$ and $33\%$. The three groups are sampled at $\delta_+=\{1.5,1,0.5\}$ or $\delta_-=\{0.5,1,1.5\}$ respectively of $\rho=\{0.2,0.5\}$. For example, the first group under $\rho=0.2$ and the positive setting is sampled at $\tilde{\rho}=\delta_+\rho=1.5\times0.2=0.3$.


Tables~\ref{MSE-nonuniform}--\ref{MSE-nonuniinv} display the MSE of the prediction error under the four generating models, with varying sampling proportions and sparsity levels for both positive and negative settings. The result is consistent with the findings under uniform sampling. Under the SBM model, SBM still has the best general performance. Under other settings, LE+ is generally the best method while LE and SE have comparable performance. Timing comparison is not included as it is similar to that of uniform sampling.

\begin{table}[!ht]
\centering
\caption{\label{MSE-nonuniform}MSE of Link Prediction Performance on Synthetic Networks, Positive MNAR Setting ($10^{-3}$).}
\resizebox{0.65\textwidth}{!}{
\begin{tabular}{| c | c | c c c c c c c |}
\hline
Model&($\rho$,Deg.)&LE&SE&LE+&NS&ERGM&SBM&DCBM\\
\hline
\multirow{8}{*}{SBM}&\multirow{2}{*}{(0.2,20)}&2.05&2.14&1.89&3.48&10.4&\textbf{1.72}&2.54\\
&&(0.023)&(0.023)&(0.016)&(0.026)&(0.032)&(0.026)&(0.03)\\
\cline{2-9}
&\multirow{2}{*}{(0.2,50)}&6.41&6.11&5.85&6.68&41.7&\textbf{2.49}&6.37\\
&&(0.083)&(0.061)&(0.048)&(0.062)&(0.13)&(0.035)&(0.117)\\
\cline{2-9}
&\multirow{2}{*}{(0.5,20)}&1.12&1.06&1.04&1.37&9.66&\textbf{0.549}&1.14\\
&&(0.014)&(0.01)&(0.009)&(0.008)&(0.033)&(0.009)&(0.012)\\
\cline{2-9}
&\multirow{2}{*}{(0.5,50)}&3.88&4.19&3.78&3.82&37&\textbf{1.26}&5.94\\
&&(0.015)&(0.045)&(0.023)&(0.028)&(0.082)&(0.038)&(0.12)\\
\hline
\multirow{8}{*}{DCBM}&\multirow{2}{*}{(0.2,20)}&3.6&3.39&\textbf{2.9}&4.4&11.4&2.97&3.63\\
&&(0.036)&(0.038)&(0.025)&(0.024)&(0.032)&(0.032)&(0.045)\\
\cline{2-9}
&\multirow{2}{*}{(0.2,50)}&7.26&7.2&6.92&12&51.3&6.61&\textbf{6.24}\\
&&(0.103)&(0.055)&(0.048)&(0.087)&(0.069)&(0.123)&(0.131)\\
\cline{2-9}
&\multirow{2}{*}{(0.5,20)}&1.12&1.26&1.13&2&11.1&\textbf{1.07}&1.16\\
&&(0.014)&(0.012)&(0.01)&(0.016)&(0.039)&(0.019)&(0.02)\\
\cline{2-9}
&\multirow{2}{*}{(0.5,50)}&2.4&2.6&2.45&6.53&44.9&4.5&\textbf{2.17}\\
&&(0.018)&(0.024)&(0.021)&(0.068)&(0.105)&(0.049)&(0.125)\\
\hline
\multirow{8}{*}{product}&\multirow{2}{*}{(0.2,20)}&0.93&0.961&\textbf{0.87}&2.91&8.94&1.39&1.46\\
&&(0.006)&(0.012)&(0.008)&(0.021)&(0.032)&(0.05)&(0.023)\\
\cline{2-9}
&\multirow{2}{*}{(0.2,50)}&2.8&2.83&\textbf{2.71}&6.33&36.9&4.77&3.9\\
&&(0.013)&(0.026)&(0.015)&(0.021)&(0.073)&(0.033)&(0.037)\\
\cline{2-9}
&\multirow{2}{*}{(0.5,20)}&\textbf{0.392}&0.43&\textbf{0.392}&1&8.39&0.702&0.58\\
&&(0.003)&(0.005)&(0.003)&(0.005)&(0.028)&(0.006)&(0.006)\\
\cline{2-9}
&\multirow{2}{*}{(0.5,50)}&\textbf{1.4}&1.46&\textbf{1.4}&3.19&35.2&3.4&2.16\\
&&(0.007)&(0.013)&(0.008)&(0.019)&(0.066)&(0.022)&(0.015)\\
\hline
\multirow{8}{*}{distance}&\multirow{2}{*}{(0.2,20)}&1.58&1.57&\textbf{1.51}&3.52&9.35&1.93&2.07\\
&&(0.008)&(0.011)&(0.007)&(0.022)&(0.028)&(0.048)&(0.022)\\
\cline{2-9}
&\multirow{2}{*}{(0.2,50)}&6.8&6.74&\textbf{6.54}&9.92&40.2&8.26&7.7\\
&&(0.02)&(0.026)&(0.017)&(0.02)&(0.073)&(0.026)&(0.044)\\
\cline{2-9}
&\multirow{2}{*}{(0.5,20)}&\textbf{1.13}&1.15&\textbf{1.12}&1.73&8.94&1.31&1.3\\
&&(0.003)&(0.006)&(0.003)&(0.004)&(0.029)&(0.006)&(0.006)\\
\cline{2-9}
&\multirow{2}{*}{(0.5,50)}&5.5&5.44&\textbf{5.17}&7.35&38.7&7.08&6.15\\
&&(0.017)&(0.016)&(0.016)&(0.02)&(0.067)&(0.019)&(0.015)\\
\hline
\end{tabular}
}
\end{table}
\begin{table}[!ht]
    \centering
    \caption{\label{MSE-nonuniinv}MSE of Link Prediction Performance on Synthetic Networks, Negative MNAR Setting ($10^{-3}$).}
\resizebox{0.65\textwidth}{!}{
\begin{tabular}{| c | c | c c c c c c c |}
\hline
Model&($\rho$,Deg.)&LE&SE&LE+&NS&ERGM&SBM&DCBM\\
\hline
\multirow{8}{*}{SBM}&\multirow{2}{*}{(0.2,20)}&4.13&4.1&\textbf{3.37}&5.59&8.16&3.78&4.26\\
&&(0.039)&(0.083)&(0.044)&(0.03)&(0.02)&(0.109)&(0.069)\\
\cline{2-9}
&\multirow{2}{*}{(0.2,50)}&11.3&9.5&9.1&18&42.5&\textbf{4.85}&9.89\\
&&(0.198)&(0.095)&(0.093)&(0.139)&(0.056)&(0.121)&(0.158)\\
\cline{2-9}
&\multirow{2}{*}{(0.5,20)}&2.55&2.75&2.33&4.39&10.1&\textbf{1.75}&3.08\\
&&(0.03)&(0.041)&(0.027)&(0.036)&(0.025)&(0.048)&(0.028)\\
\cline{2-9}
&\multirow{2}{*}{(0.5,50)}&5.49&6.23&5.59&18.8&53.9&\textbf{0.638}&5.46\\
&&(0.057)&(0.069)&(0.048)&(0.183)&(0.089)&(0.047)&(0.124)\\
\hline
\multirow{8}{*}{DCBM}&\multirow{2}{*}{(0.2,20)}&5.02&4.72&\textbf{3.95}&5.85&9.33&4.23&5.01\\
&&(0.04)&(0.052)&(0.029)&(0.03)&(0.022)&(0.072)&(0.071)\\
\cline{2-9}
&\multirow{2}{*}{(0.2,50)}&11.2&9.95&9.41&18&49.2&\textbf{9.35}&9.64\\
&&(0.144)&(0.093)&(0.075)&(0.1)&(0.054)&(0.199)&(0.214)\\
\cline{2-9}
&\multirow{2}{*}{(0.5,20)}&2.49&2.87&2.38&4.02&11.3&\textbf{2.21}&3.04\\
&&(0.027)&(0.029)&(0.02)&(0.026)&(0.032)&(0.039)&(0.048)\\
\cline{2-9}
&\multirow{2}{*}{(0.5,50)}&\textbf{4.7}&5.32&4.79&15.4&57.9&9.66&5.13\\
&&(0.034)&(0.05)&(0.035)&(0.118)&(0.163)&(0.106)&(0.27)\\
\hline
\multirow{8}{*}{product}&\multirow{2}{*}{(0.2,20)}&1.5&1.54&\textbf{1.35}&3.92&5.82&3.24&2.61\\
&&(0.012)&(0.025)&(0.017)&(0.025)&(0.022)&(0.217)&(0.064)\\
\cline{2-9}
&\multirow{2}{*}{(0.2,50)}&4.2&4.03&\textbf{3.85}&9.84&29&5.14&4.89\\
&&(0.022)&(0.03)&(0.024)&(0.034)&(0.057)&(0.051)&(0.046)\\
\cline{2-9}
&\multirow{2}{*}{(0.5,20)}&0.866&1&\textbf{0.861}&2.24&6.46&1.33&1.49\\
&&(0.006)&(0.013)&(0.008)&(0.011)&(0.021)&(0.029)&(0.015)\\
\cline{2-9}
&\multirow{2}{*}{(0.5,50)}&2.78&2.88&\textbf{2.76}&8.58&32.1&5.32&4.16\\
&&(0.015)&(0.025)&(0.018)&(0.033)&(0.056)&(0.036)&(0.03)\\
\hline
\multirow{8}{*}{distance}&\multirow{2}{*}{(0.2,20)}&2.3&2.37&\textbf{2.06}&4.53&6.48&3.31&3.3\\
&&(0.023)&(0.044)&(0.018)&(0.026)&(0.021)&(0.146)&(0.068)\\
\cline{2-9}
&\multirow{2}{*}{(0.2,50)}&8.33&8.27&\textbf{7.94}&13.5&32.9&9.01&9.13\\
&&(0.029)&(0.047)&(0.025)&(0.034)&(0.051)&(0.043)&(0.046)\\
\cline{2-9}
&\multirow{2}{*}{(0.5,20)}&1.7&1.79&\textbf{1.68}&3.01&7.34&1.98&2.31\\
&&(0.008)&(0.014)&(0.009)&(0.009)&(0.021)&(0.017)&(0.017)\\
\cline{2-9}
&\multirow{2}{*}{(0.5,50)}&7.55&7.74&\textbf{6.96}&13.3&37.5&10.5&9.08\\
&&(0.03)&(0.035)&(0.026)&(0.037)&(0.05)&(0.038)&(0.028)\\
\hline
\end{tabular}
}
\end{table}

\paragraph{Summary.} Overall, no extreme change in performance is observed across the tested missingness mechanisms. This indicates that non-uniform sampling does not severely affect our method LE and any benchmark methods, supporting our theoretical claim that our method is applicable for non-uniform sampling as long as \ref{ass:rank} holds. 

\section{LINK PREDICTION IN REAL-WORLD NETWORKS}

In this section, we evaluate our approach to link prediction on real-world networks. We consider three examples: two social networks and one airline traffic network. We wish to demonstrate the merits and limitations of our approach.  In the first two examples, our method outperforms other benchmarks, indicating that the low-rank model assumption is reasonable. In the third example, the NS method works better. As such, the network may not be well-approximated with a low-rank structure, whereas the graphon structure underlying the NS method might be more suitable.

\subsection{Data Information}

The first network is the Enron email network of \citet{priebe2005scan} between 184 employees of the Enron company; edges indicate employees’ email communication. The second network is a faculty friendship network between 81 faculty members at a UK university  \citep{nepusz2008fuzzy}. The last network contains 755 airports in the United States, based on the U.S. Bureau of Transportation Statistics, where two airports are connected if there is a direct passenger flight between them. For simplicity, we ignore edge directions and weights. 
The average node density, betweenness centrality, and closeness centrality of each node are shown in  Figure \ref{Real-attr}. Overall, the airport network exhibits much stronger heterogeneity in topological features. The nodes are, on average, much less connected and much less central, while special hub airports provide highly dense and central connections. Such strong topological heterogeneity suggests that the low-rank model may not be able to approximate this network well, as might occur in real-world situations  \citep{seshadhri2020impossibility}. As demonstrated in the link prediction evaluation, this topological variation does lead to a preference for different link prediction strategies.

\subsection{Performance Evaluation}

Contrary to our synthetic experiments, it is impossible to evaluate the MSE of the missing probability matrix.
Instead, we specifically evaluate the methods based on their link prediction accuracy on the unobserved entries ($\M{A}_{22}$). Because the entries in $\M{A}_{22}$ are binary, for a given threshold on values of $\hat{\M{P}}_{22}$, we get
\begin{align*}
\mathrm{TPR}&=\frac{\#\{\text{Correctly predicted edges}\}}{\#\{\text{Total existing edges}\}}\\
\mathrm{FPR}&=\frac{\#\{\text{Incorrectly predicted edges}\}}{\#\{\text{Total existing non-edges}\}}.
\end{align*}
Varying the threshold and plotting the true positive rate (TPR) against the false positive rate (FPR) produces a ROC curve. We evaluate the link prediction accuracy using the area under the ROC curves (AUC) as our metric for link prediction performance. ROC curve is a commonly used performance metric in link prediction problems \citep{liben2007link,zhao2017link}.
\begin{figure}[!ht]
    \centering
    \includegraphics[width=0.75\textwidth]{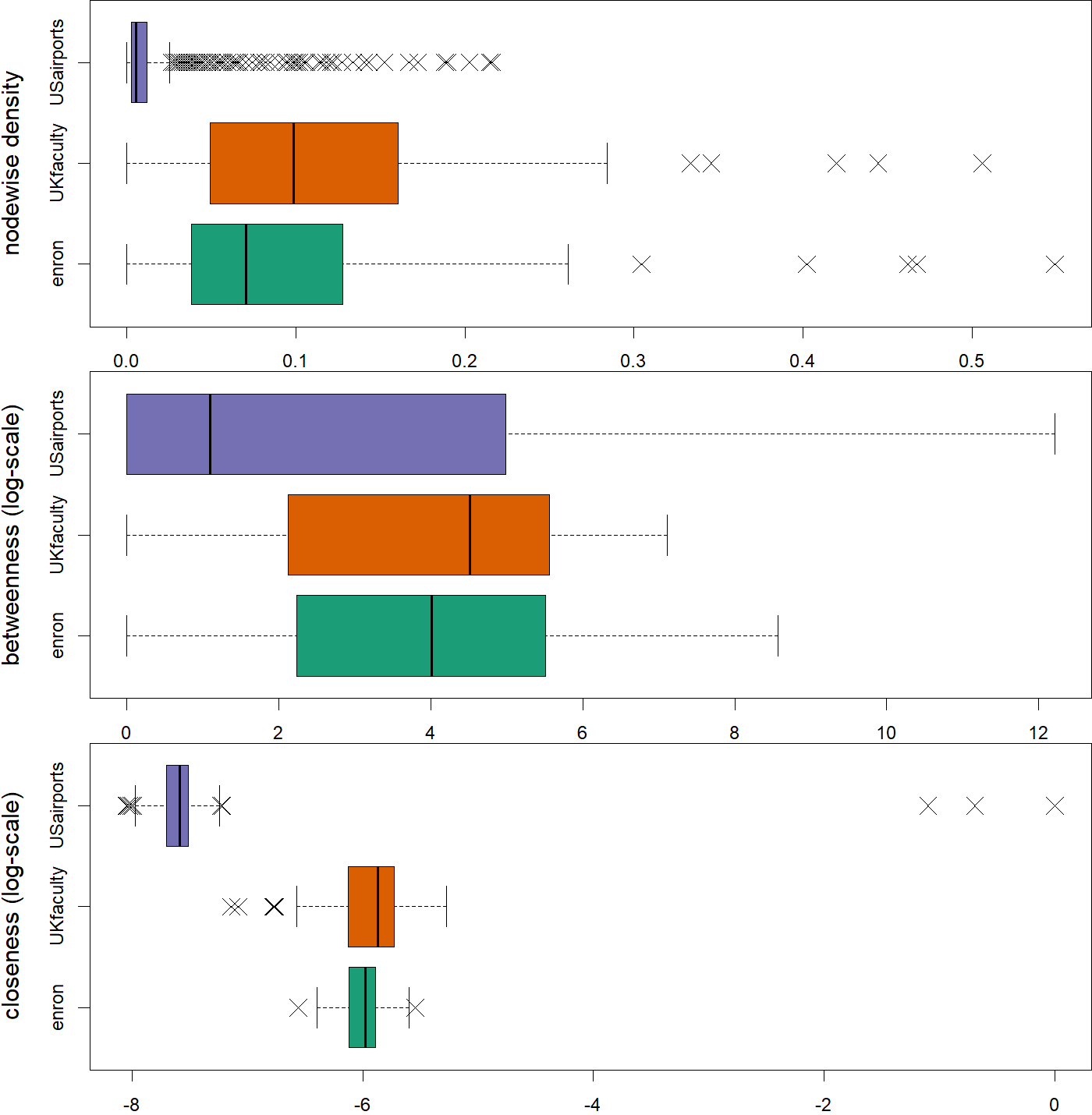}
    \caption{\label{Real-attr}Attributes of Real Networks; Node Degree (Left), Betweenness (Middle), and Closeness (Right).}
\end{figure}

\subsubsection{Evaluation under MAR}

The same sampling mechanism as for synthetic networks is applied. Results for the three networks are summarized in Table \ref{AUC-real}. For the Enron and UK faculty networks, LE+ has the best prediction performance among all methods. Meanwhile, LE has comparable performance to LE+ and is better than other methods. The low-rank model is thus suitable for these two networks. The NS method also delivers reasonably good performance thanks to its generality. For the U.S. airport network, under low sampling proportions $\rho=\{0.1,0.2\}$, LE, SE, LE+, NS and DCBM are similar and are inferior to the SBM method. Under higher sampling proportions $\rho=\{0.5,0.9\}$, the NS method has the best performance, while the other benchmark methods perform similarly. This pattern implies that all low-rank methods perform no better than a block approximation while missing additional structures which the NS method might integrate. Such an observation conveys that low-rank models may not be feasible for this airport network. Also, this is consistent with the synthetic experiments where SBM performs the best under very low sampling proportions due to its simplicity. The missing values for the UK faculty network under the sampling proportion $\rho=0.1$ is not produced because it has a small network size (\(N=81\)) inappropriate for such a low value of $\rho$. 

\begin{table}[!ht]
\centering
\caption{\label{AUC-real}Predictive AUC on Three Networks.}
\resizebox{0.65\textwidth}{!}{
\begin{tabular}{| c | c | c c c c c c c |}
\hline
Dataset&$\rho$&LE&SE&LE+&NS&ERGM&SBM&DCBM\\
\hline
\multirow{8}{*}{enron}&\multirow{2}{*}{0.1}&0.7&0.699&\textbf{0.724}&0.634&0.508&0.663&0.68\\
&&(0.004)&(0.005)&(0.005)&(0.004)&(0.003)&(0.003)&(0.004)\\
\cline{2-9}
&\multirow{2}{*}{0.2}&0.784&0.774&\textbf{0.803}&0.718&0.508&0.713&0.741\\
&&(0.003)&(0.003)&(0.004)&(0.004)&(0.003)&(0.004)&(0.004)\\
\cline{2-9}
&\multirow{2}{*}{0.5}&0.874&0.85&\textbf{0.882}&0.824&0.527&0.774&0.788\\
&&(0.003)&(0.002)&(0.002)&(0.003)&(0.004)&(0.003)&(0.003)\\
\cline{2-9}
&\multirow{2}{*}{0.9}&0.903&0.882&\textbf{0.909}&0.845&0.64&0.805&0.828\\
&&(0.005)&(0.005)&(0.004)&(0.006)&(0.009)&(0.009)&(0.007)\\
\hline
\multirow{7}{*}{UK faculty}&\multirow{2}{*}{0.2}&0.728&0.701&\textbf{0.731}&0.631&0.51&0.657&0.667\\
&&(0.007)&(0.006)&(0.009)&(0.007)&(0.002)&(0.008)&(0.006)\\
\cline{2-9}
&\multirow{2}{*}{0.5}&0.837&0.792&\textbf{0.842}&0.77&0.516&0.724&0.739\\
&&(0.004)&(0.005)&(0.004)&(0.005)&(0.003)&(0.007)&(0.007)\\
\cline{2-9}
&\multirow{2}{*}{0.9}&\textbf{0.845}&0.768&\textbf{0.846}&0.785&0.606&0.733&0.747\\
&&(0.012)&(0.016)&(0.012)&(0.016)&(0.014)&(0.02)&(0.018)\\
\hline
\multirow{8}{*}{US airports}&\multirow{2}{*}{0.1}&0.76&0.776&0.775&0.742&0.502&\textbf{0.835}&0.798\\
&&(0.006)&(0.006)&(0.006)&(0.005)&(<0.001)&(0.002)&(0.003)\\
\cline{2-9}
&\multirow{2}{*}{0.2}&0.822&0.82&0.837&0.817&0.505&\textbf{0.859}&0.837\\
&&(0.003)&(0.004)&(0.003)&(0.002)&(<0.001)&(0.002)&(0.002)\\
\cline{2-9}
&\multirow{2}{*}{0.5}&0.878&0.89&\textbf{0.893}&\textbf{0.893}&0.527&\textbf{0.893}&0.876\\
&&(0.003)&(0.004)&(0.003)&(0.002)&(0.001)&(0.002)&(0.002)\\
\cline{2-9}
&\multirow{2}{*}{0.9}&0.89&0.91&0.92&\textbf{0.931}&0.535&0.903&0.884\\
&&(0.006)&(0.005)&(0.005)&(0.005)&(0.004)&(0.007)&(0.008)\\
\hline
\end{tabular}
}
\end{table}

\subsubsection{Evaluation under MNAR}

Real network results again suggest that our method is applicable for non-uniform sampling as long as assumption A1 holds. Results for the three networks are summarized in Tables \ref{AUC-nonuniform}--\ref{AUC-nonuniinv}. Similar to the synthetic experiments, when compared with the uniform-sampling setting, the positive setting, which puts more weight on high-degree nodes, improves the performance of all methods and the negative setting does the opposite.

\begin{table}[!ht]
\centering
\caption{\label{AUC-nonuniform}Predictive AUC on Three Networks, Positive MNAR Setting.}
\resizebox{0.65\textwidth}{!}{
\begin{tabular}{| c | c | c c c c c c c |}
\hline
Dataset&$\rho$&LE&SE&LE+&NS&ERGM&SBM&DCBM\\
\hline
\multirow{4}{*}{enron}&\multirow{2}{*}{0.2}&0.803&0.792&\textbf{0.822}&0.743&0.514&0.728&0.748\\
&&(0.003)&(0.003)&(0.003)&(0.003)&(0.003)&(0.003)&(0.003)\\
\cline{2-9}
&\multirow{2}{*}{0.5}&0.85&0.838&\textbf{0.864}&0.802&0.538&0.776&0.746\\
&&(0.004)&(0.005)&(0.004)&(0.003)&(0.004)&(0.005)&(0.005)\\
\hline
\multirow{4}{*}{UK faculty}&\multirow{2}{*}{0.2}&0.747&0.732&\textbf{0.774}&0.657&0.512&0.662&0.686\\
&&(0.004)&(0.006)&(0.004)&(0.005)&(0.002)&(0.008)&(0.006)\\
\cline{2-9}
&\multirow{2}{*}{0.5}&\textbf{0.802}&0.777&0.801&0.732&0.524&0.736&0.733\\
&&(0.008)&(0.008)&(0.007)&(0.006)&(0.004)&(0.006)&(0.008)\\
\hline
\multirow{4}{*}{US airports}&\multirow{2}{*}{0.2}&0.833&0.836&0.84&0.832&0.508&\textbf{0.866}&0.842\\
&&(0.003)&(0.005)&(0.004)&(0.002)&($<$0.001)&(0.002)&(0.002)\\
\cline{2-9}
&\multirow{2}{*}{0.5}&0.833&\textbf{0.849}&0.845&0.835&0.526&0.843&0.787\\
&&(0.003)&(0.003)&(0.003)&(0.003)&(0.001)&(0.004)&(0.006)\\
\hline
\end{tabular}
}
\end{table}

\begin{table}[!ht]
\centering
\caption{\label{AUC-nonuniinv}Predictive AUC on Three Networks, Negative MNAR Setting.}
\resizebox{0.65\textwidth}{!}{
\begin{tabular}{| c | c | c c c c c c c |}
\hline
Dataset&$\rho$&LE&SE&LE+&NS&ERGM&SBM&DCBM\\
\hline
\multirow{4}{*}{enron}&\multirow{2}{*}{0.2}&0.752&0.749&\textbf{0.781}&0.699&0.501&0.697&0.723\\
&&(0.004)&(0.003)&(0.003)&(0.003)&(0.003)&(0.003)&(0.003)\\
\cline{2-9}
&\multirow{2}{*}{0.5}&0.855&0.827&\textbf{0.867}&0.807&0.514&0.753&0.777\\
&&(0.003)&(0.002)&(0.002)&(0.003)&(0.003)&(0.003)&(0.002)\\
\hline
\multirow{4}{*}{UK faculty}&\multirow{2}{*}{0.2}&0.675&0.651&\textbf{0.709}&0.61&0.513&0.633&0.636\\
&&(0.008)&(0.009)&(0.007)&(0.007)&(0.002)&(0.008)&(0.006)\\
\cline{2-9}
&\multirow{2}{*}{0.5}&\textbf{0.837}&0.763&0.833&0.768&0.526&0.703&0.737\\
&&(0.004)&(0.005)&(0.004)&(0.005)&(0.003)&(0.007)&(0.006)\\
\hline
\multirow{4}{*}{US airports}&\multirow{2}{*}{0.2}&0.773&0.784&0.788&0.743&0.503&\textbf{0.851}&0.817\\
&&(0.006)&(0.007)&(0.006)&(0.006)&($<$0.001)&(0.002)&(0.003)\\
\cline{2-9}
&\multirow{2}{*}{0.5}&0.871&0.887&\textbf{0.892}&0.882&0.513&0.883&0.879\\
&&(0.002)&(0.003)&(0.003)&(0.002)&(0.001)&(0.002)&(0.002)\\
\hline
\end{tabular}
}
\end{table}

\section{DISCUSSION}

We have introduced an LE algorithm to fit low-rank models on egocentrically sampled partial networks. The approach is computationally efficient and presents theoretical guarantees for its correctness. Our technique is the first known consistent method for general low-rank models under egocentric sampling, and it does not require ``missing completely at random" assumptions. It can accurately predict missing links when the true model is low-rank or can be approximated by low-rank structures. However, we want to stress that one may have difficulty determining whether a partial network is from a low-rank model in practice; no single mode of link prediction works well in all cases. A practically preferable approach involves combining methods to achieve more adaptive performance in various situations, as studied by  \citet{peixoto2018reconstructing,ghasemian2020stacking,li2021network,yao2021bayesian}. Notably, even in this ensemble setting, having a strong individual method that works well in numerous situations is still necessary. We believe the proposed LE approach greatly contributes to potential candidates as one such model.

The proposed method can be extended in several directions. One limitation of our theory is the strict low-rank assumption; theoretical properties for approximately low-rank models are thus far unknown. A more general theory in these scenarios would largely expand the method’s scope. As another example, if a sequence of evolving networks is observed (subject to egocentric missingness), a critical but open question concerns how to fit a dynamic network model compatible with this missingness. We will leave this and other investigations for future work.

There are several applications in which we can potentially embed the current method. For example, graph learning methods such as GNN \citep{zhou2020graph} take networks as input, and our method can help to handle the semi-supervised prediction setting when the training data are only partially observed. Another application is to use our theory to study the privacy-preserving algorithms for network data \citep{hehir2021consistent}.

\subsubsection*{Acknowledgements}
T. Li is supported in part by the NSF grant DMS-2015298 and the University of Virginia 3-Caveliers award. G. Chan (as a student at the University of Virginia) is also supported in part by the NSF grant DMS-2015298.

\bibliographystyle{apalike}
\bibliography{bib}

\begin{thebibliography}{}

\bibitem[Airoldi et~al., 2008]{airoldi2008mixed}
Airoldi, E.~M., Blei, D., Fienberg, S., and Xing, E. (2008).
\newblock Mixed membership stochastic blockmodels.
\newblock {\em Advances in neural information processing systems}, 21.

\bibitem[Alatas et~al., 2016]{alatas2016network}
Alatas, V., Banerjee, A., Chandrasekhar, A.~G., Hanna, R., and Olken, B.~A.
  (2016).
\newblock Network structure and the aggregation of information: Theory and
  evidence from indonesia.
\newblock {\em American Economic Review}, 106(7):1663--1704.

\bibitem[Ali and Dwyer, 2009]{ali2009estimating}
Ali, M.~M. and Dwyer, D.~S. (2009).
\newblock Estimating peer effects in adolescent smoking behavior: a
  longitudinal analysis.
\newblock {\em Journal of Adolescent Health}, 45(4):402--408.

\bibitem[Arnaboldi et~al., 2013]{arnaboldi2013egocentric}
Arnaboldi, V., Guazzini, A., and Passarella, A. (2013).
\newblock Egocentric online social networks: Analysis of key features and
  prediction of tie strength in facebook.
\newblock {\em Computer Communications}, 36(10-11):1130--1144.

\bibitem[Athreya et~al., 2017]{athreya2017statistical}
Athreya, A., Fishkind, D.~E., Tang, M., Priebe, C.~E., Park, Y., Vogelstein,
  J.~T., Levin, K., Lyzinski, V., and Qin, Y. (2017).
\newblock Statistical inference on random dot product graphs: a survey.
\newblock {\em The Journal of Machine Learning Research}, 18(1):8393--8484.

\bibitem[Bandiera and Rasul, 2006]{bandiera2006social}
Bandiera, O. and Rasul, I. (2006).
\newblock Social networks and technology adoption in northern mozambique.
\newblock {\em The Economic Journal}, 116(514):869--902.

\bibitem[Bickel and Chen, 2009]{bickel2009nonparametric}
Bickel, P.~J. and Chen, A. (2009).
\newblock A nonparametric view of network models and newman--girvan and other
  modularities.
\newblock {\em Proceedings of the National Academy of Sciences},
  106(50):21068--21073.

\bibitem[Cand{\`{e}}s and Plan, 2010]{Candes2010a}
Cand{\`{e}}s, E.~J. and Plan, Y. (2010).
\newblock {Matrix completion with noise}.
\newblock {\em Proceedings of the IEEE}.

\bibitem[Chandrasekhar and Lewis, 2011]{chandrasekhar2011econometrics}
Chandrasekhar, A. and Lewis, R. (2011).
\newblock Econometrics of sampled networks.
\newblock {\em Unpublished manuscript, MIT.[422]}.

\bibitem[Chatterjee, 2015]{chatterjee2015matrix}
Chatterjee, S. (2015).
\newblock Matrix estimation by universal singular value thresholding.
\newblock {\em The Annals of Statistics}, 43(1):177--214.

\bibitem[Chen and Lei, 2018]{chen2018network}
Chen, K. and Lei, J. (2018).
\newblock Network cross-validation for determining the number of communities in
  network data.
\newblock {\em Journal of the American Statistical Association},
  113(521):241--251.

\bibitem[Drineas et~al., 2006]{Drineas2006a}
Drineas, P., Kannan, R., and Mahoney, M. (2006).
\newblock {Fast Monte Carlo algorithms for matrices III: Computing a compressed
  approximate matrix decomposition}.
\newblock {\em SIAM Journal on Computing}, 36(1):184--206.

\bibitem[Gao and Ma, 2021]{gao2021minimax}
Gao, C. and Ma, Z. (2021).
\newblock Minimax rates in network analysis: Graphon estimation, community
  detection and hypothesis testing.
\newblock {\em Statistical Science}, 36(1):16--33.

\bibitem[Ghasemian et~al., 2020]{ghasemian2020stacking}
Ghasemian, A., Hosseinmardi, H., Galstyan, A., Airoldi, E.~M., and Clauset, A.
  (2020).
\newblock Stacking models for nearly optimal link prediction in complex
  networks.
\newblock {\em Proceedings of the National Academy of Sciences},
  117(38):23393--23400.

\bibitem[Goldenberg et~al., 2010]{goldenberg2010survey}
Goldenberg, A., Zheng, A.~X., Fienberg, S.~E., Airoldi, E.~M., et~al. (2010).
\newblock A survey of statistical network models.
\newblock {\em Foundations and Trends{\textregistered} in Machine Learning},
  2(2):129--233.

\bibitem[Handcock and Gile, 2010]{handcock2010modeling}
Handcock, M.~S. and Gile, K.~J. (2010).
\newblock Modeling social networks from sampled data.
\newblock {\em The Annals of Applied Statistics}, 4(1):5.

\bibitem[Handcock et~al., 2019]{ergm}
Handcock, M.~S., Hunter, D.~R., Butts, C.~T., Goodreau, S.~M., Krivitsky,
  P.~N., and Morris, M. (2019).
\newblock {\em ergm: Fit, Simulate and Diagnose Exponential-Family Models for
  Networks}.
\newblock The Statnet Project.
\newblock R package version 3.10.4.

\bibitem[Hehir et~al., 2021]{hehir2021consistent}
Hehir, J., Slavkovic, A., and Niu, X. (2021).
\newblock Consistent spectral clustering of network block models under local
  differential privacy.
\newblock {\em arXiv preprint arXiv:2105.12615}.

\bibitem[Hoff et~al., 2002]{hoff2002latent}
Hoff, P.~D., Raftery, A.~E., and Handcock, M.~S. (2002).
\newblock Latent space approaches to social network analysis.
\newblock {\em Journal of the american Statistical association},
  97(460):1090--1098.

\bibitem[Holland et~al., 1983]{holland1983stochastic}
Holland, P.~W., Laskey, K.~B., and Leinhardt, S. (1983).
\newblock Stochastic blockmodels: First steps.
\newblock {\em Social networks}, 5(2):109--137.

\bibitem[Hunter, 2007]{hunter2007curved}
Hunter, D.~R. (2007).
\newblock Curved exponential family models for social networks.
\newblock {\em Social networks}, 29(2):216--230.

\bibitem[Jin et~al., 2017]{jin2017estimating}
Jin, J., Ke, Z.~T., and Luo, S. (2017).
\newblock Estimating network memberships by simplex vertex hunting.
\newblock {\em arXiv preprint arXiv:1708.07852}.

\bibitem[Karrer and Newman, 2011]{karrer2011stochastic}
Karrer, B. and Newman, M.~E. (2011).
\newblock Stochastic blockmodels and community structure in networks.
\newblock {\em Physical review E}, 83(1):016107.

\bibitem[Klopp et~al., 2017]{klopp2017oracle}
Klopp, O., Tsybakov, A.~B., and Verzelen, N. (2017).
\newblock Oracle inequalities for network models and sparse graphon estimation.
\newblock {\em The Annals of Statistics}, 45(1):316--354.

\bibitem[Kolaczyk and Cs{\'a}rdi, 2014]{kolaczyk2014statistical}
Kolaczyk, E.~D. and Cs{\'a}rdi, G. (2014).
\newblock {\em Statistical analysis of network data with R}, volume~65.
\newblock Springer.

\bibitem[Krivitsky and Morris, 2017]{krivitsky2017inference}
Krivitsky, P.~N. and Morris, M. (2017).
\newblock Inference for social network models from egocentrically sampled data,
  with application to understanding persistent racial disparities in hiv
  prevalence in the us.
\newblock {\em The annals of applied statistics}, 11(1):427.

\bibitem[Kumar et~al., 2020]{kumar2020link}
Kumar, A., Singh, S.~S., Singh, K., and Biswas, B. (2020).
\newblock Link prediction techniques, applications, and performance: A survey.
\newblock {\em Physica A: Statistical Mechanics and its Applications},
  553:124289.

\bibitem[Le and Li, 2020]{le2020linear}
Le, C.~M. and Li, T. (2020).
\newblock Linear regression and its inference on noisy network-linked data.
\newblock {\em arXiv preprint arXiv:2007.00803}.

\bibitem[Lei and Rinaldo, 2015]{lei2015consistency}
Lei, J. and Rinaldo, A. (2015).
\newblock Consistency of spectral clustering in stochastic block models.
\newblock {\em The Annals of Statistics}, 43(1):215--237.

\bibitem[Li and Le, 2021]{li2021network}
Li, T. and Le, C.~M. (2021).
\newblock Network estimation by mixing: Adaptivity and more.
\newblock {\em arXiv preprint arXiv:2106.02803}.

\bibitem[Li et~al., 2022]{li2022hierarchical}
Li, T., Lei, L., Bhattacharyya, S., Van~den Berge, K., Sarkar, P., Bickel,
  P.~J., and Levina, E. (2022).
\newblock Hierarchical community detection by recursive partitioning.
\newblock {\em Journal of the American Statistical Association},
  117(538):951--968.

\bibitem[Li et~al., 2021]{li2021randnet}
Li, T., Levina, E., Zhu, J., and Le, C.~M. (2021).
\newblock {\em randnet: Random Network Model Estimation, Selection and
  Parameter Tuning. R package version 0.4}.

\bibitem[Li et~al., 2023]{wu2018link}
Li, T., Wu, Y.-J., Levina, E., and Zhu, J. (2023).
\newblock Link prediction for egocentrically sampled networks.
\newblock {\em Journal of Computational and Graphical Statistics}, 0(0):1--24.

\bibitem[Liben-Nowell and Kleinberg, 2007]{liben2007link}
Liben-Nowell, D. and Kleinberg, J. (2007).
\newblock The link-prediction problem for social networks.
\newblock {\em Journal of the American society for information science and
  technology}, 58(7):1019--1031.

\bibitem[Lin et~al., 2020]{lin2020theoretical}
Lin, Q., Lunde, R., and Sarkar, P. (2020).
\newblock On the theoretical properties of the network jackknife.
\newblock In {\em International Conference on Machine Learning}, pages
  6105--6115. PMLR.

\bibitem[Little and Rubin, 2019]{little2019statistical}
Little, R.~J. and Rubin, D.~B. (2019).
\newblock {\em Statistical analysis with missing data}, volume 793.
\newblock John Wiley \& Sons.

\bibitem[Mart{\'\i}nez et~al., 2016]{martinez2016survey}
Mart{\'\i}nez, V., Berzal, F., and Cubero, J.-C. (2016).
\newblock A survey of link prediction in complex networks.
\newblock {\em ACM computing surveys (CSUR)}, 49(4):1--33.

\bibitem[Mukherjee et~al., 2017]{mukherjee2017clustering}
Mukherjee, S.~S., Sarkar, P., and Lin, L. (2017).
\newblock On clustering network-valued data.
\newblock {\em Advances in neural information processing systems}, 30.

\bibitem[Nepusz et~al., 2008]{nepusz2008fuzzy}
Nepusz, T., Petr{\'o}czi, A., N{\'e}gyessy, L., and Bazs{\'o}, F. (2008).
\newblock Fuzzy communities and the concept of bridgeness in complex networks.
\newblock {\em Physical Review E}, 77(1):016107.

\bibitem[Newman, 2018]{newman2018networks}
Newman, M. (2018).
\newblock {\em Networks}.
\newblock Oxford university press.

\bibitem[Owen and Perry, 2009]{owen2009bi}
Owen, A.~B. and Perry, P.~O. (2009).
\newblock Bi-cross-validation of the svd and the nonnegative matrix
  factorization.
\newblock {\em The annals of applied statistics}, 3(2):564--594.

\bibitem[Peixoto, 2018]{peixoto2018reconstructing}
Peixoto, T.~P. (2018).
\newblock Reconstructing networks with unknown and heterogeneous errors.
\newblock {\em Physical Review X}, 8(4):041011.

\bibitem[Plan and Vershynin, 2011]{Plan2011}
Plan, Y. and Vershynin, R. (2011).
\newblock {One-bit compressed sensing by linear programming}.
\newblock {\em arXiv preprint}, 2(1103909):1--18.

\bibitem[Priebe et~al., 2005]{priebe2005scan}
Priebe, C.~E., Conroy, J.~M., Marchette, D.~J., and Park, Y. (2005).
\newblock Scan statistics on enron graphs.
\newblock {\em Computational \& Mathematical Organization Theory},
  11(3):229--247.

\bibitem[Rohe et~al., 2011]{rohe2011spectral}
Rohe, K., Chatterjee, S., and Yu, B. (2011).
\newblock Spectral clustering and the high-dimensional stochastic blockmodel.
\newblock {\em The Annals of Statistics}, 39(4):1878--1915.

\bibitem[Rubin-Delanchy et~al., 2017]{rubin2017statistical}
Rubin-Delanchy, P., Cape, J., Tang, M., and Priebe, C.~E. (2017).
\newblock A statistical interpretation of spectral embedding: the generalised
  random dot product graph.
\newblock {\em arXiv preprint arXiv:1709.05506}.

\bibitem[Sengupta and Chen, 2018]{sengupta2018block}
Sengupta, S. and Chen, Y. (2018).
\newblock A block model for node popularity in networks with community
  structure.
\newblock {\em Journal of the Royal Statistical Society: Series B (Statistical
  Methodology)}, 80(2):365--386.

\bibitem[Seshadhri et~al., 2020]{seshadhri2020impossibility}
Seshadhri, C., Sharma, A., Stolman, A., and Goel, A. (2020).
\newblock The impossibility of low-rank representations for triangle-rich
  complex networks.
\newblock {\em Proceedings of the National Academy of Sciences},
  117(11):5631--5637.

\bibitem[Yao et~al., 2021]{yao2021bayesian}
Yao, Y., Pir{\v{s}}, G., Vehtari, A., and Gelman, A. (2021).
\newblock Bayesian hierarchical stacking: Some models are (somewhere) useful.
\newblock {\em Bayesian Analysis}, 1(1):1--29.

\bibitem[Young and Scheinerman, 2007]{young2007random}
Young, S.~J. and Scheinerman, E.~R. (2007).
\newblock Random dot product graph models for social networks.
\newblock In {\em International Workshop on Algorithms and Models for the
  Web-Graph}, pages 138--149. Springer.

\bibitem[Yu et~al., 2015]{yu2015useful}
Yu, Y., Wang, T., and Samworth, R.~J. (2015).
\newblock A useful variant of the davis--kahan theorem for statisticians.
\newblock {\em Biometrika}, 102(2):315--323.

\bibitem[Zhang et~al., 2017]{zhang2017estimating}
Zhang, Y., Levina, E., and Zhu, J. (2017).
\newblock Estimating network edge probabilities by neighbourhood smoothing.
\newblock {\em Biometrika}, 104(4):771--783.

\bibitem[Zhao et~al., 2017]{zhao2017link}
Zhao, Y., Wu, Y.-J., Levina, E., and Zhu, J. (2017).
\newblock Link prediction for partially observed networks.
\newblock {\em Journal of Computational and Graphical Statistics},
  26(3):725--733.

\bibitem[Zhou et~al., 2020]{zhou2020graph}
Zhou, J., Cui, G., Hu, S., Zhang, Z., Yang, C., Liu, Z., Wang, L., Li, C., and
  Sun, M. (2020).
\newblock Graph neural networks: A review of methods and applications.
\newblock {\em AI open}, 1:57--81.

\end{thebibliography}

\appendix
\section{Proofs}
In this section, we present the derivation of the error bound in the main paper. In our analysis, we will say some event happens with high probability if it happens with probability tending to 1 as $N\to \infty$. We use $C$ and $c$ as generic universal constants that may vary case by case. Let $\M{P},\M{A},\hat{\M{P}}$ be as defined as in the main paper, we introduce the following notations:
\begin{itemize}
    \item $\norm{\cdot}$ denotes the spectral norm
    \item $\norm{\cdot}_F$ denotes the Frobenius norm
    \item $p^* = \mathrm{max}_{ij}p_{ij}$
    \item $\lambda_k(\M{M})$ is the $k$-th largest eigenvalue of the matrix $M$
    \item $\sigma_k(\M{M})$ is the $k$-th largest singular value of the matrix $M$
\end{itemize}
\begin{lemma}[\citet{owen2009bi}]\label{A:lemma:owen}
For any $p\times q$ matrix $\M{M}$ with the partition
\begin{equation*}
    \M{M} = 
    \left(\begin{array}{cc}
    \mathbf{M}_{11}&\mathbf{M}_{12}\\
    \mathbf{M}_{21}&\mathbf{M}_{22}\\
    \end{array}\right),
\end{equation*}
Suppose $\rank(\M{M}_{11}) = \rank(\M{M})$, we have
    \begin{equation*}
        \mathbf{M}_{11}
        = \mathbf{M}_{12}\mathbf{M}_{22}^+\mathbf{M}_{21}.
    \end{equation*}
\end{lemma}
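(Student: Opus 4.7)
The plan is to prove the identity by constructing a rank-$r$ factorization of $\M{M}$ and reducing the claim to elementary pseudoinverse identities for full-rank matrices. Let $r = \rank(\M{M})$ and write $\M{M} = \M{L}\M{R}$ with $\M{L} \in \mathbb{R}^{p \times r}$ and $\M{R} \in \mathbb{R}^{r \times q}$ (for instance via the thin SVD: take $\M{M} = \M{U}_r \M{\Sigma}_r \M{V}_r^T$ and set $\M{L} = \M{U}_r \M{\Sigma}_r$, $\M{R} = \M{V}_r^T$). Partition these factors conformably with the block partition of $\M{M}$ as
\[
\M{L} = \begin{pmatrix}\M{L}_1 \\ \M{L}_2\end{pmatrix}, \qquad \M{R} = (\M{R}_1, \M{R}_2),
\]
so that $\M{M}_{ij} = \M{L}_i \M{R}_j$ for $i,j \in \{1,2\}$.

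Next I would translate the rank assumption $\rank(\M{M}_{11}) = r$ into full-rank statements about the factors. Since $\M{M}_{11} = \M{L}_1 \M{R}_1$ and the rank of a product cannot exceed the rank of either factor (each of which is at most $r$), the hypothesis forces $\M{L}_1$ to have full column rank $r$ and $\M{R}_1$ to have full row rank $r$. In particular, $\M{L}_1^+\M{L}_1 = \M{I}_r$ and $\M{R}_1\M{R}_1^+ = \M{I}_r$. Moreover, under these full-rank conditions the reverse-order law $\M{M}_{11}^+ = (\M{L}_1\M{R}_1)^+ = \M{R}_1^+ \M{L}_1^+$ is valid; this is the only step that requires care, since the reverse-order law is false for arbitrary products, but it holds whenever the left factor has full column rank and the right factor has full row rank (verified directly from the four Moore–Penrose axioms).

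With this identity in hand, the self-consistency formula follows by direct computation:
\[
\M{M}_{21}\M{M}_{11}^+\M{M}_{12} = (\M{L}_2\M{R}_1)(\M{R}_1^+\M{L}_1^+)(\M{L}_1\M{R}_2) = \M{L}_2(\M{R}_1\M{R}_1^+)(\M{L}_1^+\M{L}_1)\M{R}_2 = \M{L}_2\M{R}_2 = \M{M}_{22},
\]
which is the formula as used in Step 3 of Algorithm~\ref{algo:LE}. I am phrasing the identity in the direction that recovers $\M{M}_{22}$ from the other three blocks, which is the form consumed by the algorithm; the expression written in the lemma statement appears to have the roles of the upper-left and lower-right blocks interchanged, but the proof is identical up to relabelling, since the argument is symmetric in the two diagonal blocks provided the corresponding block is the one whose rank equals $\rank(\M{M})$. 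The main (and only nontrivial) obstacle is the justification of the reverse-order pseudoinverse factorization, which becomes routine once the full-column-rank and full-row-rank conditions on $\M{L}_1$ and $\M{R}_1$ have been extracted from the rank hypothesis.
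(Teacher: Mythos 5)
Your proof is correct, and it cannot be compared against an internal argument because the paper does not prove this lemma at all: it is imported verbatim from \citet{owen2009bi} and used as a black box. Your rank-factorization route is a clean, self-contained justification. The extraction of the full-rank conditions is right: writing $\M{M}=\M{L}\M{R}$ with $\M{L}\in\mathbb{R}^{p\times r}$, $\M{R}\in\mathbb{R}^{r\times q}$ and $\M{M}_{ij}=\M{L}_i\M{R}_j$, the hypothesis $\rank(\M{M}_{11})=r$ forces $\rank(\M{L}_1)=\rank(\M{R}_1)=r$, since the rank of $\M{L}_1\M{R}_1$ is at most the rank of either factor. You also correctly isolate the only delicate step, the reverse-order law $(\M{L}_1\M{R}_1)^+=\M{R}_1^+\M{L}_1^+$, which indeed holds precisely under full column rank of $\M{L}_1$ and full row rank of $\M{R}_1$: with $\M{L}_1^+=(\M{L}_1^T\M{L}_1)^{-1}\M{L}_1^T$ and $\M{R}_1^+=\M{R}_1^T(\M{R}_1\M{R}_1^T)^{-1}$ the four Penrose conditions are verified directly, and then $\M{M}_{21}\M{M}_{11}^+\M{M}_{12}=\M{L}_2(\M{R}_1\M{R}_1^+)(\M{L}_1^+\M{L}_1)\M{R}_2=\M{M}_{22}$. (Owen and Perry's own argument is in the same spirit, working from a rank factorization and the observation that the rank condition makes the remaining blocks linear images of the first block row and column, so your approach is essentially the canonical one rather than a detour.)

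You are also right to flag the displayed identity in the lemma as stated: under the hypothesis $\rank(\M{M}_{11})=\rank(\M{M})$, the correct conclusion is $\M{M}_{22}=\M{M}_{21}\M{M}_{11}^+\M{M}_{12}$, not $\M{M}_{11}=\M{M}_{12}\M{M}_{22}^+\M{M}_{21}$. The printed version is false as written; for instance $\M{M}=\diag(1,0)$ partitioned into $1\times 1$ blocks satisfies $\rank(\M{M}_{11})=\rank(\M{M})=1$ yet $\M{M}_{12}\M{M}_{22}^+\M{M}_{21}=0\neq 1=\M{M}_{11}$. The form you prove is the one the paper actually consumes: Step 3 of Algorithm~\ref{algo:LE} computes $\hat{\M{P}}_{22}=\M{A}_{12}^T\tilde{\M{P}}_{11}^+\M{A}_{12}$, and the proof of Theorem~\ref{A:thm:frobenius} opens with the population identity $\M{P}_{22}=\M{P}_{21}\M{P}_{11}^+\M{P}_{12}$, so the lemma statement has its diagonal blocks interchanged by typo and your relabelled version is the intended one.
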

\begin{lemma}[\citet{lei2015consistency}]\label{lemma:concentration}
Let $\M{P}$ be the probability under the inhomogeneous Erd\"{o}s-Renyi model and $\M{A}$ be the adjacency matrix from $\M{P}$. Assume that $np^* \geq c\log{n}$ for some constant $c>0$. There exists a constant $C$ such that
    \begin{equation}
        \norm{\M{A}-\M{P}}\leq C\sqrt{np^*} 
    \end{equation}
    with high probability.
\end{lemma}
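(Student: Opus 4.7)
The plan is to follow the epsilon-net plus light/heavy decomposition strategy of Feige--Ofek, as formalized for the inhomogeneous Erd\"{o}s--R\'enyi setting by Lei and Rinaldo. Since $\M{A}-\M{P}$ is symmetric, $\norm{\M{A}-\M{P}} = \sup_{x \in S^{n-1}} |x^T(\M{A}-\M{P})x|$; a standard discretization argument replaces this by $\sup_{x,y\in \mathcal{N}}|x^T(\M{A}-\M{P})y|$ up to a universal constant, where $\mathcal{N}$ is a $1/4$-net of $S^{n-1}$ with $|\mathcal{N}| \le 9^n$. It therefore suffices to show $|x^T(\M{A}-\M{P})y| \le C\sqrt{np^*}$ uniformly over $(x,y) \in \mathcal{N}\times\mathcal{N}$.

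For each fixed pair $(x,y)$, I would split the index pairs into \emph{light} pairs $\mathcal{L} = \{(i,j) : |x_iy_j| \le \sqrt{p^*/n}\}$ and \emph{heavy} pairs $\mathcal{H}$. On $\mathcal{L}$, each summand $x_iy_j(A_{ij}-P_{ij})$ is bounded by $\sqrt{p^*/n}$ and the total variance is at most $p^*\sum_i x_i^2\sum_j y_j^2 = p^*$, so Bernstein's inequality gives a subgaussian tail up to the scale $\sqrt{np^*}$: taking $t = C\sqrt{np^*}$ yields failure probability $\exp(-c'C^2 n)$, which defeats the union-bound cardinality $|\mathcal{N}|^2 \le 81^n$ once $C$ is large.

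The main obstacle is the heavy part, where the Bernstein variance proxy is too loose and a direct probabilistic bound fails. The resolution is deterministic: on a high-probability ``nice'' event that (i) no row sum of $\M{A}$ exceeds $C_1 np^*$ and (ii) the edge-discrepancy $|e(S,T) - \mathbb{E}\,e(S,T)|$ is simultaneously controlled for all pairs $S, T \subseteq [n]$ by a Feige--Ofek type bound, a dyadic bucketing of the magnitudes $|x_i|$ and $|y_j|$ lets one estimate the heavy contribution block-by-block, yielding $\sum_{(i,j)\in\mathcal{H}} x_i y_j(A_{ij}-P_{ij}) \lesssim \sqrt{np^*}$ uniformly in $(x,y)$. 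The bulk of the technical work is to verify that (i) and (ii) each hold with probability $1-o(1)$ under $np^*\ge c\log n$; (i) follows from Chernoff plus a union bound over the $n$ rows, while (ii) requires a delicate union bound over all subset sizes, with the Chernoff tail balanced against the combinatorial entropy $\binom{n}{|S|}\binom{n}{|T|}$, and is the heaviest step.

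Combining the light and heavy bounds, taking a union bound over $\mathcal{N}\times\mathcal{N}$, and undoing the discretization then yields $\norm{\M{A}-\M{P}} \le C\sqrt{np^*}$ with probability tending to one, as claimed.
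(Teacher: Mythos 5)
Your proposal is correct and takes essentially the same route as the paper, which does not prove this lemma itself but imports it from \citet{lei2015consistency}, whose argument is precisely the $\epsilon$-net reduction plus the light/heavy-pair decomposition of Feige--Ofek that you outline, with the light pairs handled by Bernstein against the $9^n$-net entropy and the heavy pairs handled deterministically on the bounded-degree/discrepancy event. The only caution is your gloss of the heavy-pair event: it is not literal uniform control of $|e(S,T)-\mathbb{E}\,e(S,T)|$ (which fails for small sets), but the standard Feige--Ofek dichotomy --- for all $S,T$ with $|S|\le|T|$, either $e(S,T)/(|S||T|)\le Cp^*$ or $e(S,T)\log\bigl(e(S,T)/(|S||T|p^*)\bigr)\le C|T|\log(n/|T|)$ --- which is the form your ``Feige--Ofek type bound'' must take for the dyadic block-by-block estimate to close.
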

\begin{lemma}[\citet{yu2015useful}]\label{Davis}
Given a symmetric matrix $\M{P}$. Suppose $\rank(\M{P}) = K$ and let its eigendecomposition be ${\M{U}}{\M{\Sigma}}{\M{U}}^T$, where $\M{\Sigma} = \diag(\lambda_{1}, \cdots, \lambda_{K})$ contains all the eigenvalues in nonincreasing order. For another symmetric matrix $\M{A}$ in the same dimension, suppose its rank $K$ eigendecomposition is given by $\tilde{\M{U}}\tilde{\M{\Sigma}}\tilde{\M{U}}^T$.
There exists an orthogonal matrix $\M{O}\in\mathbb{R}^{K\times K}$ such that
\begin{equation}
    \norm{\tilde{\M{U}}\M{O}-\M{U}}_F 
    \leq\frac{3\sqrt{K}||A-P||}{\lambda_K}.
\end{equation}
\end{lemma}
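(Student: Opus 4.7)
The plan is to leverage an exact self-consistency identity for $\M{P}_{22}$ in terms of $\M{P}_{11}$ and $\M{P}_{12}$, and then control the deviation of the plug-in estimator $\hat{\M{P}}_{22}$ via a four-term perturbation decomposition.

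First, Assumption~\ref{ass:rank} forces a low-rank factorization $\M{P}=UV^T$ in which the sub-blocks $U_1, V_1$ (corresponding to the first $n$ rows) both have full column rank $K$, from which a direct computation gives the population identity $\M{P}_{22}=\M{P}_{21}\M{P}_{11}^+\M{P}_{12}=\M{P}_{12}^T\M{P}_{11}^+\M{P}_{12}$ (the dual of Lemma~\ref{A:lemma:owen}, using $\M{P}_{21}=\M{P}_{12}^T$). Setting $\M{E}_{12} := \M{A}_{12}-\M{P}_{12}$, I then expand
\begin{align*}
\hat{\M{P}}_{22}-\M{P}_{22} &= \M{P}_{12}^T(\tilde{\M{P}}_{11}^+-\M{P}_{11}^+)\M{P}_{12} + \M{E}_{12}^T\tilde{\M{P}}_{11}^+\M{P}_{12} \\
&\quad + \M{P}_{12}^T\tilde{\M{P}}_{11}^+\M{E}_{12} + \M{E}_{12}^T\tilde{\M{P}}_{11}^+\M{E}_{12},
\end{align*}
and bound the four terms separately.

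The three auxiliary ingredients I will need are: (i) by Lemma~\ref{lemma:concentration}, $\norm{\M{A}_{11}-\M{P}_{11}}\lesssim\sqrt{np^*}$ with high probability, and an analogous rectangular bound (via Hermitian dilation of $\M{E}_{12}$) gives $\norm{\M{E}_{12}}\lesssim\sqrt{Np^*}$; (ii) since $\tilde{\M{P}}_{11}$ is the best rank-$K$ approximation of $\M{A}_{11}$ and $\M{P}_{11}$ has rank $K$, the triangle inequality yields $\norm{\tilde{\M{P}}_{11}-\M{P}_{11}}\le 2\norm{\M{A}_{11}-\M{P}_{11}}\lesssim\sqrt{np^*}$; (iii) Assumption~\ref{ass:condition} gives $\sigma_K(\M{P}_{11})\sim np^*$, so $\norm{\M{P}_{11}^+}\lesssim 1/(np^*)$, and Weyl's inequality together with (ii) guarantees $\sigma_K(\tilde{\M{P}}_{11})\sim np^*$ with high probability, whence a Wedin-type pseudoinverse-perturbation bound (or, equivalently, expanding $\tilde{\M{P}}_{11}^+$ through its truncated SVD and applying Lemma~\ref{Davis}) yields $\norm{\tilde{\M{P}}_{11}^+-\M{P}_{11}^+}\lesssim 1/(np^*)^{3/2}$. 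I then convert spectral bounds to Frobenius bounds by exploiting rank: each of the four summands has rank at most $2K$, so $\norm{M}_F\le\sqrt{2K}\,\norm{M}$. Combined with $\norm{\M{P}_{12}}\le\norm{\M{P}}\lesssim Np^*$ (Assumption~\ref{ass:condition}), the four terms contribute $\sqrt{K}\,N^2\sqrt{p^*}/n^{3/2}$, $\sqrt{K}\,(N/n)\sqrt{Np^*}$ (twice), and $\sqrt{K}\,N/n$, respectively, which sum to at most the stated bound (the $\log n$ factor in the second summand provides slack for the sparse regime).

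The main technical obstacle is the quadratic noise term $\M{E}_{12}^T\tilde{\M{P}}_{11}^+\M{E}_{12}$. Although $\tilde{\M{P}}_{11}^+$ is a nonlinear function of $\M{A}_{11}$, the entries of $\M{A}_{11}$ and $\M{A}_{12}$ are independent under the inhomogeneous Erd\"{o}s-Renyi model, so we may condition on $\M{A}_{11}$ and treat $\tilde{\M{P}}_{11}^+$ as deterministic when bounding any $\M{E}_{12}$-dependent quantity. Securing the explicit $\log n$ factor in the target's second summand appears to require either a Hanson--Wright-type concentration bound for this quadratic form or absorbing a logarithmic slack from the sparse-regime concentration; either route suffices. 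A secondary task is justifying the rectangular concentration $\norm{\M{E}_{12}}\lesssim\sqrt{Np^*}$ via a dilation argument so as to match the symmetric form required by Lemma~\ref{lemma:concentration}.
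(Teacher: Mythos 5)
You have proved the wrong statement. The lemma at hand is the Davis--Kahan-type eigenvector alignment bound imported from \citet{yu2015useful}: for a symmetric rank-$K$ matrix $\M{P}$ with eigendecomposition $\M{U}\M{\Sigma}\M{U}^T$ and a rank-$K$ eigendecomposition $\tilde{\M{U}}\tilde{\M{\Sigma}}\tilde{\M{U}}^T$ of $\M{A}$, there is an orthogonal $\M{O}\in\mathbb{R}^{K\times K}$ with $\norm{\tilde{\M{U}}\M{O}-\M{U}}_F\le 3\sqrt{K}\norm{\M{A}-\M{P}}/\lambda_K$. What you sketched instead is a proof of the paper's main result, Theorem~\ref{thm:frobenius}: the four-term expansion of $\hat{\M{P}}_{22}-\M{P}_{22}$, the concentration bounds on $\M{E}_{12}=\M{A}_{12}-\M{P}_{12}$, and the pseudoinverse perturbation estimate are all ingredients of that theorem's argument (indeed your decomposition is a reasonable alternative to the paper's triangle-inequality chain there), but none of it addresses the alignment of eigenvector matrices that this lemma asserts. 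Worse, your step (iii) explicitly invokes Lemma~\ref{Davis} as a tool (``expanding $\tilde{\M{P}}_{11}^+$ through its truncated SVD and applying Lemma~\ref{Davis}''), so read as a proof of this statement the argument is circular.

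For the record, a correct treatment is short and entirely different in character. The paper itself does not reprove the lemma; it cites \citet{yu2015useful}. The underlying argument is: since $\rank(\M{P})=K$, the relevant eigengap for the leading $K$-dimensional eigenspace of $\M{P}$ is $|\lambda_K|$, and the $\sin\Theta$ variant of \citet{yu2015useful} gives $\norm{\sin\Theta(\tilde{\M{U}},\M{U})}_F\le 2\sqrt{K}\norm{\M{A}-\M{P}}/\lambda_K$, the factor $\sqrt{K}$ arising from passing between spectral and Frobenius norms of the $K\times K$ angle matrix. One then takes $\M{O}=\M{W}_1\M{W}_2^T$ from the singular value decomposition $\tilde{\M{U}}^T\M{U}=\M{W}_1\M{D}\M{W}_2^T$, for which $\norm{\tilde{\M{U}}\M{O}-\M{U}}_F^2\le 2\norm{\sin\Theta(\tilde{\M{U}},\M{U})}_F^2$, yielding the constant $2\sqrt{2}\le 3$. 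Nothing in your proposal --- conditioning on $\M{A}_{11}$, Hanson--Wright bounds for the quadratic noise term, or the dilation argument for rectangular concentration --- plays any role in this statement; those belong to the downstream theorem where the lemma is applied.
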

\begin{lemma}[\citet{athreya2017statistical}]\label{lem:orthobound}
Let $\M{P}$ be the probability under the inhomogeneous Erd\"{o}s-Renyi model with $\rank(\M{P})= K$ and $\M{A}$ be the adjacency matrix from $\M{P}$. With the notations of Lemma~\ref{Davis}, and the same orthogonal matrix $\M{O}$, we have
\begin{equation}
    \norm{\M{O}\M{\Sigma}-\tilde{\M{\Sigma}}\M{O}}_F \le C(K^2+\log{n})
\end{equation}
with high probability.
\end{lemma}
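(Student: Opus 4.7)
The plan is to start from $\tilde{\M{U}}^T\M{A}\tilde{\M{U}} = \tilde{\M{\Sigma}}$ and expand around the Davis--Kahan alignment $\M{U} \approx \tilde{\M{U}}\M{O}$. Setting $\Delta = \tilde{\M{U}}\M{O} - \M{U}$, Lemma~\ref{Davis} combined with Lemma~\ref{lemma:concentration} and Assumption~\ref{ass:condition} gives $\|\Delta\|_F \lesssim \sqrt{K/(np^*)}$. Writing $\tilde{\M{U}} = (\M{U}+\Delta)\M{O}^T$, substituting into the eigen-identity, and then using $\M{U}^T\M{A}\M{U} = \M{\Sigma} + \M{U}^T E\M{U}$ with $E = \M{A} - \M{P}$ yields, after right-multiplication by $\M{O}$,
\begin{equation*}
\tilde{\M{\Sigma}}\M{O} - \M{O}\M{\Sigma} = \M{O}\M{U}^T E\M{U} + \M{O}\M{U}^T\M{A}\Delta + \M{O}\Delta^T\M{A}\M{U} + \M{O}\Delta^T\M{A}\Delta.
\end{equation*}
The task is then to bound each of the four summands in Frobenius norm.

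The first term is the genuinely random piece: since each scalar $u_i^T E u_j$ is a centered sum of independent, bounded random variables with variance $O(p^*)$, a Bernstein inequality shows each entry is $O(\log n)$ with probability at least $1 - n^{-c}$, and a union bound over the $K^2$ entries yields $\|\M{U}^T E\M{U}\|_F \lesssim \log n$ under $K \lesssim \sqrt{\log n}$. For the cross terms, I would split $\M{A} = \M{P} + E$ and exploit the identity $\M{U}^T\Delta = W\M{O} - I$, where $W = \M{U}^T\tilde{\M{U}}$; the Davis--Kahan choice of $\M{O}$ makes $W\M{O}$ symmetric PSD with eigenvalues $\cos\theta_i$, so $\|\M{U}^T\Delta\|_F^2 = \sum_i(1-\cos\theta_i)^2 \le \|\Delta\|_F^4/4$, a quadratic improvement over $\|\Delta\|_F$. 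This gives $\|\M{\Sigma}\M{U}^T\Delta\|_F \lesssim np^* \cdot K/(np^*) = K$ from the $\M{P}$-part, and $\|\M{U}^T E\Delta\|_F \le \|E\|\cdot\|\Delta\|_F \lesssim \sqrt{K}$ from the $E$-part, so each cross term is of order $K$.

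For the final term $\Delta^T\M{A}\Delta$, the $\M{P}$-contribution $(\M{U}^T\Delta)^T\M{\Sigma}(\M{U}^T\Delta)$ has Frobenius norm at most $\|\M{\Sigma}\|\cdot\|\M{U}^T\Delta\|_F^2 \lesssim np^*\cdot(K/(np^*))^2 = K^2/(np^*)$, and the $E$-contribution is $\lesssim \sqrt{K}\|\Delta\|_F^2\|E\| \lesssim K^{3/2}/\sqrt{np^*}$, both lower-order under $np^* \gtrsim \log n$. Summing the four contributions yields $\|\tilde{\M{\Sigma}}\M{O}-\M{O}\M{\Sigma}\|_F \lesssim K^2 + \log n$ as stated. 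The main obstacle is the concentration step: the naive bound $\|\M{U}^T E\M{U}\|_F \le \sqrt{K}\|E\|$ gives only $\sqrt{Knp^*}$, which is too crude in dense regimes, so one needs the entrywise Bernstein (or equivalently a Hanson--Wright argument on the quadratic form $\|P_U E P_U\|_F^2$) to extract the pure $\log n$ scaling; the other delicate bookkeeping is the quadratic sin-theta improvement for $\|\M{U}^T\Delta\|_F$, without which the cross-term contribution would be $K\sqrt{np^*}$ rather than the needed $K$.
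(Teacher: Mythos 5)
Your reconstruction follows what is, in effect, the only available route: the paper itself does not prove this lemma but imports it from \citet{athreya2017statistical}, and your argument mirrors the proof in that reference --- the exact identity $\tilde{\M\Sigma}\M O-\M O\M\Sigma=\M O\M U^TE\M U+\M O\M U^T\M A\Delta+\M O\Delta^T\M A\M U+\M O\Delta^T\M A\Delta$ obtained from $\tilde{\M U}^T\M A\tilde{\M U}=\tilde{\M\Sigma}$ and $\tilde{\M U}=(\M U+\Delta)\M O^T$, the Procrustes choice of $\M O$ (SVD of $\M U^T\tilde{\M U}$) making $\M U^T\tilde{\M U}\M O$ symmetric PSD with eigenvalues $\cos\theta_i$, hence $\norm{\M U^T\Delta}_F^2=\sum_i(1-\cos\theta_i)^2\le\norm{\Delta}_F^4/4$, and term-by-term bounds. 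The algebra is exact, and the cross and quadratic terms check out: with $\norm{\Delta}_F\lesssim\sqrt{K/(np^*)}$ from Lemmas~\ref{Davis} and \ref{lemma:concentration} plus Assumption~\ref{ass:condition}, one indeed gets $\norm{\M\Sigma\M U^T\Delta}_F\lesssim K$, $\norm{\M U^TE\Delta}_F\lesssim\sqrt{K}$, and quadratic contributions of order $K^2/(np^*)$ and $K^{3/2}/\sqrt{np^*}$, all within the stated budget.

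The one genuine gap is in the concentration step for $\norm{\M U^TE\M U}_F$ --- the step you yourself flag as the crux, but then resolve with a tail bound that is too weak to close the argument. Without a delocalization (incoherence) assumption on $\M U$, the worst-case increment in Bernstein's inequality is of constant order, so Bernstein only gives each entry $u_i^TEu_j=O(\log n)$ with probability $1-n^{-c}$, exactly as you state; but then aggregating $K^2$ entries yields $\norm{\M U^TE\M U}_F\lesssim K\log n$, not $\log n$. At $K\asymp\sqrt{\log n}$ (allowed by the paper's assumptions) this is $\log^{3/2}n$, which strictly exceeds the target $C(K^2+\log n)\asymp\log n$, so your Term~1 bound does not deliver the lemma as stated. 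The fix is to replace Bernstein by Hoeffding: writing $u_i^TEu_j=\sum_{k\le l}a_{kl}E_{kl}$ with independent bounded summands and $\sum_{k\le l}a_{kl}^2\le 4$ (using only $\norm{u_i}_2=\norm{u_j}_2=1$), the sub-Gaussian proxy is the $\ell_2$ mass of the coefficients rather than their sup-norm, giving $\mathbb{P}(|u_i^TEu_j|\ge t)\le 2e^{-ct^2}$ and hence each entry $O(\sqrt{\log n})$ with high probability. Then $\norm{\M U^TE\M U}_F\lesssim K\sqrt{\log n}\le\frac{1}{2}(K^2+\log n)$ by AM--GM, which both closes the proof and explains the precise shape $C(K^2+\log n)$ of the bound --- it holds for all $K$, with no need for the side condition $K\lesssim\sqrt{\log n}$ that your accounting had to invoke.
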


\begin{ass}[Low-rank recoverable]\label{A:ass:rank}
The rank of the model satisifes $\rank(\M{P}_{11}) = \rank(\M{P}) = K$.
\end{ass}
\begin{ass}[Well-conditioned model]\label{A:ass:condition}
 There exists a constant $\psi>0$ such that
    $$\frac{1}{\psi}\cdot np^* \le \sigma_{K}(\M{P}_{11}) \le \sigma_{1}(\M{P}_{11}) \le \psi \cdot np^*$$
    $$\frac{1}{\psi}\cdot Np^* \le \sigma_{K}(\M{P}) \le \sigma_{1}(\M{P}) \le \psi \cdot Np^*$$
\end{ass}
Assumption~\ref{A:ass:rank} is strictly needed to ensure the validity of the low-rank recovery on the population matrix $\M{P}$ by Lemma~\ref{A:lemma:owen}. In contrast, assumptions \ref{A:ass:condition} can be relaxed for better generality. However, we keep it in the current form for conciseness and interpretability of our error bound. Moreover, both \ref{A:ass:rank} and \ref{A:ass:condition} are indeed motivated by the general sparse graphon model of \cite{bickel2009nonparametric} and are easy to hold when the egocentric sampling is done randomly on the nodes under many low-rank models. In particular, we have the following proposition 

\begin{theorem}\label{A:thm:frobenius}
Under assumptions~\ref{A:ass:rank} and \ref{A:ass:condition}, further assume that $np^*> c\log{n}$ and $K \le c\sqrt{\log n}$ for some constant $c>0$, we have  
\begin{equation}
    \norm{\hat{\M P}_{22}-\M P_{22}}_F \le C\sqrt{K}\left(\left(\frac{N}{n}\right)^{3/2}\sqrt{KNp^*} + \frac{N^2}{n^2}\log n\right)
\end{equation}
for some constant $C>0$ with high probability.
\end{theorem}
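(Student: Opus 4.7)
The plan is to reduce the problem to a perturbation analysis via the self-consistency identity in Lemma~\ref{A:lemma:owen} and then control each piece using the concentration bound in Lemma~\ref{lemma:concentration} together with the eigen-perturbation bounds in Lemmas~\ref{Davis} and~\ref{lem:orthobound}. Under Assumption~\ref{A:ass:rank}, Lemma~\ref{A:lemma:owen} supplies the population identity $\M{P}_{22} = \M{P}_{21}\M{P}_{11}^{+}\M{P}_{12}$. Writing $\M{E}_{12} = \M{A}_{12}-\M{P}_{12}$ and $\M{D} = \tilde{\M{P}}_{11}^{+}-\M{P}_{11}^{+}$, bilinear expansion of $\hat{\M{P}}_{22}-\M{P}_{22} = \M{A}_{12}^T\tilde{\M{P}}_{11}^{+}\M{A}_{12}-\M{P}_{12}^T\M{P}_{11}^{+}\M{P}_{12}$ yields four pieces: a pseudoinverse-perturbation term $\M{A}_{12}^T\M{D}\M{A}_{12}$, two signal-noise terms $\M{E}_{12}^T\M{P}_{11}^{+}\M{P}_{12}$ and $\M{P}_{12}^T\M{P}_{11}^{+}\M{E}_{12}$, and a noise-noise term $\M{E}_{12}^T\M{P}_{11}^{+}\M{E}_{12}$. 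I would bound each in Frobenius norm, using $\norm{\M{M}}_F\le\sqrt{\rank(\M{M})}\,\norm{\M{M}}$ whenever one factor has small rank ($K$ or $2K$), and passing the other factor to operator norm.

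Two operator-norm inputs feed every piece. Lemma~\ref{lemma:concentration} applied to $\M{A}-\M{P}$ gives $\norm{\M{E}_{12}}\lesssim\sqrt{Np^*}$ with high probability, and applied to the observed subnetwork gives $\norm{\M{A}_{11}-\M{P}_{11}}\lesssim\sqrt{np^*}$. Assumption~\ref{A:ass:condition} supplies $\sigma_K(\M{P}_{11})\sim np^*$ and $\norm{\M{P}_{12}}\le\norm{\M{P}}\sim Np^*$; combined with $np^*\gtrsim\log n$ and Weyl's inequality this upgrades to $\sigma_K(\tilde{\M{P}}_{11})\sim np^*$, so both $\norm{\M{P}_{11}^{+}}$ and $\norm{\tilde{\M{P}}_{11}^{+}}$ are $\lesssim 1/(np^*)$.

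The crux is a sharp bound on $\M{D}$. Because $\M{A}_{11}$ is symmetric, its truncated SVD coincides with its truncated eigendecomposition, so I write $\M{P}_{11} = \M{U}\M{\Sigma}\M{U}^T$ and $\tilde{\M{P}}_{11} = \tilde{\M{U}}\tilde{\M{\Sigma}}\tilde{\M{U}}^T$. Lemma~\ref{Davis} produces an orthogonal $\M{O}$ with $\norm{\tilde{\M{U}}\M{O}-\M{U}}_F\lesssim\sqrt{K}/\sqrt{np^*}$, and Lemma~\ref{lem:orthobound} controls the eigenvalue discrepancy via $\norm{\M{O}\M{\Sigma}-\tilde{\M{\Sigma}}\M{O}}_F\lesssim K^2+\log n$. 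Using the algebraic identity $\M{O}\M{\Sigma}^{-1}-\tilde{\M{\Sigma}}^{-1}\M{O} = \tilde{\M{\Sigma}}^{-1}(\tilde{\M{\Sigma}}\M{O}-\M{O}\M{\Sigma})\M{\Sigma}^{-1}$, the eigenvalue discrepancy transfers to an inverse form of order $(K^2+\log n)/(np^*)^2$. Expanding $\M{D} = \tilde{\M{U}}\tilde{\M{\Sigma}}^{-1}\tilde{\M{U}}^T-\M{U}\M{\Sigma}^{-1}\M{U}^T$ in the rotated basis $\tilde{\M{U}}\M{O}$ and splitting via the triangle inequality yields $\norm{\M{D}}\lesssim\sqrt{K}/(np^*)^{3/2}+(K^2+\log n)/(np^*)^2$; the hypothesis $K\lesssim\sqrt{\log n}$ collapses $K^2+\log n$ to $O(\log n)$.

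The final accounting is routine. Multiplying the bound on $\norm{\M{D}}$ by $\norm{\M{A}_{12}}^2\lesssim(Np^*)^2$ together with the $\sqrt{2K}$ rank factor yields two summands $K(N/n)^{3/2}\sqrt{Np^*} = \sqrt{K}(N/n)^{3/2}\sqrt{KNp^*}$ and $\sqrt{K}(N/n)^2\log n$ for the term $\M{A}_{12}^T\M{D}\M{A}_{12}$, which are exactly the two pieces in the theorem's bound. The remaining signal-noise and noise-noise terms contribute at orders $\sqrt{K}(N/n)\sqrt{Np^*}$ and $\sqrt{K}(N/n)$ respectively (again via the rank-$K$ trick combined with the operator bounds on $\M{P}_{12}$, $\M{P}_{11}^+$, and $\M{E}_{12}$), and both are dominated by the stated rate. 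The main obstacle I expect is the pseudoinverse-perturbation step: combining the eigenvector rotation from Lemma~\ref{Davis} with the eigenvalue discrepancy from Lemma~\ref{lem:orthobound} cleanly requires the rotated-basis identity above, and any looseness there would replace the $\log n$ factor by a polynomial in $K$ or degrade the $N/n$ dependence.
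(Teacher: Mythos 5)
Your proposal is correct and follows essentially the same route as the paper's proof: an operator-norm decomposition of $\hat{\M P}_{22}-\M P_{22}$ driven by the concentration bound of Lemma~\ref{lemma:concentration}, the rotated-basis pseudoinverse perturbation bound combining Lemmas~\ref{Davis} and~\ref{lem:orthobound}, and the rank-$2K$ trick to pass to the Frobenius norm. The only cosmetic difference is that you organize the error as a four-term bilinear expansion while the paper telescopes via nested triangle inequalities; the dominant term and the final accounting are identical.
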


For illustration, consider the following two special cases
\begin{enumerate}
\item Suppose $n$ and $N$ are in the same order, we can see that the error bound on the missing network is in the order of $K\sqrt{np^*}+\sqrt{K}\log n$. Since $\norm{\M{P}_{22}}_F \sim n\sqrt{p^*}$, we know that $\norm{\hat{\M P}_{22}-\M P_{22}}_F/\norm{\M{P}_{22}}_F \to 0$ and the estimation consistency is guaranteed under the current assumptions.
\item Suppose $K$ is bounded and $np^* = \log^2 n$. Then the error bound is in the order of $\frac{N^2}{n^2}\log n$. So $\norm{\hat{\M P}_{22}-\M P_{22}}_F/\norm{\M{P}_{22}}_F \to 0$ as long as $n \gg N^{4/5}$. Therefore, though our method allows the sampling proportion to be vanishing, the decaying rate has to be slow.
\end{enumerate}

\begin{proof}[Proof of Theorem~\ref{A:thm:frobenius}]
Consider the prediction error of the probability matrix $\hat{\M{P}}_{22}-\M{P}_{22}$. We start with the spectral norm bound. By using the subproductivity of the spectral norm and triangular inequality, we have
\begin{align}\label{Pred-err-1}
    \norm{\hat{\M{P}}_{22}-\M{P}_{22}}=& \norm{\M{A}_{21}\tilde{\M{P}}_{11}^+\M{A}_{12}-\M{P}_{21}\M{P}_{11}^+\M{P}_{12}} \notag \\
    \leq& \norm{(\M{A}_{21}-\M{P}_{21})\M{P}_{11}^+\M{P}_{12}}+\norm{\M{A}_{21}(\tilde{\M{P}}_{11}^+\M{A}_{12}-\M{P}_{11}^+\M{P}_{12})} \notag \\
    \le & \norm{\M{A}_{21}-\M{P}_{21}} \norm{\M{P}_{11}^+}\norm{\M{P}_{12}}+ \norm{\M{A}_{21}}\norm{\tilde{\M{P}}_{11}^+\M{A}_{12}-\M{P}_{11}^+\M{P}_{12}} \notag\\
    \le &\norm{\M{A}_{21}-\M{P}_{21}} \norm{\M{P}_{11}^+}\norm{\M{P}_{12}}+ \norm{\M{A}_{21}}\norm{\tilde{\M{P}}_{11}^+\M{A}_{12}-\M{P}_{11}^+\M{A}_{12}} + \norm{\M{A}_{21}}\norm{\M{P}_{11}^+\M{A}_{12}-\M{P}_{11}^+\M{P}_{12}} \notag\\
    \le & \norm{\M{A}_{21}-\M{P}_{21}} \norm{\M{P}_{11}^+}\norm{\M{P}_{12}}+ \norm{\M{A}_{21}}\norm{\M{P}_{11}^+}\norm{\M{A}_{12}-\M{P}_{12}} + \norm{\M{A}_{21}}^2\norm{\tilde{\M{P}}_{11}^+-\M{P}_{11}^+}\notag \\
    \le & \norm{\M{A}_{12}-\M{P}_{12}}\norm{\M{P}_{11}^+}\left(2\norm{\M{P}_{12}}+\norm{\M{A}_{12}-\M{P}_{12}} \right) + \left(\norm{\M{P}_{21}}+ \norm{\M{A}_{12}-\M{P}_{12}} \right)^2\norm{\tilde{\M{P}}_{11}^+-\M{P}_{11}^+}\\
    = & \mathcal{I} + \mathcal{II}.
\end{align}
Denote the eigendecompositions up to $K$ of  $\M{P}_{11}$ and $\M{A}_{11}$ by $\M{P}_{11}=\M{U\Sigma U}^T$ and $\M{A}_{11}=\tilde{\M{U}}\tilde{\M{\Sigma}} \tilde{\M{U}}^T$ respectively. Note that since $\rank(\M{P}) = K$, the eigendecomposition of $\M{P}$ is exact. Note that, since the singular values match the eigenvalues up to their signs, we have $\M{P}_{11}^+=\M{U}\M{\Sigma}^{-1}\M{U}^T$ and $\tilde{\M{P}}_{11}^+=\tilde{\M{U}}\tilde{\M{\Sigma}}^{-1}\M{U}^T$.  We try to control the terms separately.

We want to control the concentration of each component of the $\M{A}$ matrix partition. In particular, we are taking the joint event of Lemma~\ref{lem:orthobound}, and Lemma~\ref{lemma:concentration} for $\M{A}$ and $\M{A}_{11}$. Notice that here $np^* > c \log{n}$ indicates that $Np^* > c\log{N}$ due to the monotonicity of $\log{n}/{n}$. Under this condition, therefore, we have $\norm{\M A_{11} - \M P_{11}} \le C\sqrt{np^*}$ and $\norm{\M A_{21} - \M P_{21}} \le \norm{\M A - \M P}  \le C\sqrt{Np^*}$. Under this event, we also have
$$|\lambda_{k}(\M{A}_{11}) - \lambda_{k}(\M{P}_{11})| \le \norm{\M{A}_{11} - \M{P}_{11}}\le \sqrt{np^*}, 1 \le k \le K.$$
Therefore, due to the assumption that $\lambda_{K}(\M{P}_{11}) \ge \psi np^*$ and $np^* \ge c \log n$, for sufficiently large $n$, we have
$$|\lambda_{K}(\M{A}_{11})| \ge \frac{1}{2}|\lambda_K(\M{P}_{11})|.$$

\paragraph{Upper bound of term $\mathcal{I}$.}

$$\norm{\M{P}_{11}^+} = \frac{1}{|\lambda_K(\M{P}_{11})|} \le \psi\cdot \frac{1}{np^*}.$$
Also notice that $\M{P}_{12}$ is a submatrix of $\M{P}$ so $\norm{\M P_{12}} \le \psi Np^*$. So we have
$$\mathcal{I} = \norm{\M{A}_{12}-\M{P}_{12}}\norm{\M{P}_{11}^+}\left(2\norm{\M{P}_{12}}+\norm{\M{A}_{12}-\M{P}_{12}} \right) \le C\frac{N}{n}\sqrt{Np^*}.$$

\paragraph{Upper bound of term $\mathcal{II}$.} Let $\M{O}\in\mathbb{R}^{K\times K}$ be an orthogonal matrix in Lemmas~\ref{Davis} and \ref{lem:orthobound}. Consider the term $\norm{\tilde{\M{P}}_{11}^+-\M{P}_{11}^+}$:
\begin{equation*}
    \begin{aligned}
    \norm{\tilde{\M P}_{11}^+-\M P_{11}^+}=&\norm{\tilde{\M U}\tilde{\M\Sigma}^{-1}\tilde{\M U}^T-\M U\M\Sigma^{-1}\M U^T}\\
    =&\norm{\tilde{\M U}\M O\M O^\top\tilde{\M \Sigma}^{-1}\tilde{\M U}^\top-\M U\M \Sigma^{-1}\M U^\top}\\
    \leq&\norm{\tilde{\M U}\M O-\M U}\norm{\M O^\top\tilde{\M \Sigma}^{-1}\tilde{\M U}^\top}+\norm{\M U\M O^\top\tilde{\M \Sigma}^{-1}\tilde{\M U}^\top-\M U\M \Sigma^{-1}\M U^\top}\\
    \leq&\norm{\tilde{\M U}\M O-\M U}\norm{\M O^\top\tilde{\M \Sigma}^{-1}\tilde{\M U}^\top}+\norm{ \M U\M O^\top\tilde{\M \Sigma}^{-1}\M O\M O^\top\tilde{\M U}^\top-\M U\M O^\top\tilde{\M \Sigma}^{-1}\M O\M U^\top}\\
    &+\norm{\M U\M O^\top\tilde{\M \Sigma}^{-1}\M O\M U^\top-\M U\M \Sigma^{-1}\M U^\top}\\
    \leq&\norm{\tilde{\M U}\M O-\M U}\norm{\tilde{\M \Sigma}^{-1}}+\norm{\M U\M O^\top\tilde{\M \Sigma}^{-1}\M O}\norm{\M O^\top\tilde{\M U}^\top-\M U^\top}+\norm{\M U}\norm{\M O^\top\tilde{\M \Sigma}^{-1}\M O - \M \Sigma^{-1}}\norm{\M U}\\
    \leq&2\norm{\tilde{\M U}\M O-\M U}\norm{\tilde{\M \Sigma}^{-1}}+\norm{\M O^\top\tilde{\M \Sigma}^{-1}\M O - \M \Sigma^{-1}}\\
    \leq&2\norm{\tilde{\M U}\M O-\M U}\norm{\tilde{\M \Sigma}^{-1}}+\norm{\tilde{\M \Sigma}^{-1}\M O-\M O\M \Sigma^{-1}}\\
    \leq&2\norm{\tilde{\M U}\M O-\M U}\norm{\tilde{\M \Sigma}^{-1}}+\norm{\tilde{\M \Sigma}^{-1}}\norm{\M \Sigma^{-1}}\norm{\M O\M \Sigma-\tilde{\M \Sigma}\M O}\\
    =&\norm{\tilde{\M \Sigma}^{-1}}\big(2\norm{\tilde{\M U}\M O-\M U}+\norm{\M \Sigma^{-1}}\norm{\M O\M \Sigma-\tilde{\M \Sigma}\M O}\big)\\
    \leq&[\frac{1}{2}\sigma_K(\M{P}_{11})]^{-1}\Big\{6\sqrt{\frac{K}{np^*}}+[\sigma_K(\M{P}_{11})]^{-1}C\mathrm{log}(n)\Big\}\\
    \le & C\frac{1}{np^*}(\sqrt{\frac{K}{np^*}} + \frac{\log n}{np^*}).
    \end{aligned}
\end{equation*}

Therefore, we have
$$\mathcal{II} = \left(\norm{\M{P}_{21}}+ \norm{\M{A}_{12}-\M{P}_{12}} \right)^2\norm{\tilde{\M{P}}_{11}^+-\M{P}_{11}^+} \le C\left(\left(\frac{N}{n}\right)^{3/2}\sqrt{KNp^*} + \frac{N^2}{n^2}\log n\right).$$
Note that this bound for $\mathcal{II}$ dominates that for $\mathcal{I}$. Substituting both bounds for $\mathcal{I}$ and $\mathcal{II}$ into \eqref{Pred-err-1} leads to
\begin{equation}
    \norm{\hat{\M P}_{22}-\M P_{22}}\leq C\left(\left(\frac{N}{n}\right)^{3/2}\sqrt{KNp^*} + \frac{N^2}{n^2}\log n\right).
\end{equation}
Finally, notice that $\rank(\hat{\M P}_{22})$ and $\rank(\M P_{22}) = K$, which indicates that $\rank(\hat{\M P}_{22}-\M P_{22}) \le 2K$. So we have the Frobenius norm bound
\begin{equation}
    \norm{\hat{\M P}_{22}-\M P_{22}}_F \leq C\sqrt{K}\left(\left(\frac{N}{n}\right)^{3/2}\sqrt{KNp^*} + \frac{N^2}{n^2}\log n\right).
\end{equation}
Finally, notice that if even the truncation to $[0,1]$ is applied, this process would not increase the error at all entries of $\hat{\M P}_{22}$, so the error bound still holds.
\end{proof}

The error bound for the full matrix recovery is a straightforward extension of Theorem~\ref{A:thm:frobenius}.
\begin{coro}\label{A:coro:full}
Under the assumptions of Theorem~\ref{A:thm:frobenius}, for the full matrix estimator $\hat{\M{P}}$, we have  
$$\norm{\hat{\M P}-\M P}_F \le C\sqrt{K}\left(\left(\frac{N}{n}\right)^{3/2}\sqrt{KNp^*} + \frac{N^2}{n^2}\log n\right)$$
with high probability.
\end{coro}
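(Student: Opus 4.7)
The plan is to decompose the full matrix error into the three observed blocks plus the missing block, and then show that the error on the observed blocks is dominated by the error on the missing block established in Theorem~\ref{A:thm:frobenius}. Concretely, writing
\begin{equation*}
    \norm{\hat{\M P}-\M P}_F^2 = \norm{\hat{\M P}_{11}-\M P_{11}}_F^2 + 2\norm{\hat{\M P}_{12}-\M P_{12}}_F^2 + \norm{\hat{\M P}_{22}-\M P_{22}}_F^2,
\end{equation*}
only the first two terms require new bounding; the last is handled directly by Theorem~\ref{A:thm:frobenius}.

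For the observed blocks, I would work with the combined matrix $\M A_{\mathrm{obs}} = (\M A_{11},\M A_{12})$ and the analogous $\M P_{\mathrm{obs}} = (\M P_{11},\M P_{12})$. By construction, $\tilde{\M P}_{\mathrm{obs}}$ is the best rank-$K$ approximation of $\M A_{\mathrm{obs}}$, and $\M P_{\mathrm{obs}}$ itself has rank at most $K$ (since $\rank(\M P)=K$). A standard Eckart--Young-style argument then gives
\begin{equation*}
    \norm{\tilde{\M P}_{\mathrm{obs}}-\M P_{\mathrm{obs}}} \le \norm{\tilde{\M P}_{\mathrm{obs}}-\M A_{\mathrm{obs}}} + \norm{\M A_{\mathrm{obs}}-\M P_{\mathrm{obs}}} \le 2\norm{\M A_{\mathrm{obs}}-\M P_{\mathrm{obs}}}.
\end{equation*}
Since $\M A_{\mathrm{obs}}-\M P_{\mathrm{obs}}$ is a submatrix of $\M A-\M P$, its spectral norm is at most $\norm{\M A-\M P}$, and Lemma~\ref{lemma:concentration} applied to the full network yields $\norm{\M A-\M P}\le C\sqrt{Np^*}$ with high probability. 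Because $\tilde{\M P}_{\mathrm{obs}}-\M P_{\mathrm{obs}}$ has rank at most $2K$, converting spectral to Frobenius norm costs only a factor $\sqrt{2K}$:
\begin{equation*}
    \norm{\hat{\M P}_{11}-\M P_{11}}_F^2 + 2\norm{\hat{\M P}_{12}-\M P_{12}}_F^2 \le 2\norm{\tilde{\M P}_{\mathrm{obs}}-\M P_{\mathrm{obs}}}_F^2 \le C K Np^*.
\end{equation*}
(The optional symmetrization on $\hat{\M P}_{11}$ and the optional $[0,1]$-truncation only reduce the error, since $\M P_{11}$ is symmetric and $\M P$ has entries in $[0,1]$.)

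Finally I would combine the three blocks. The bound $C K Np^*$ for the observed part compares to the squared bound from Theorem~\ref{A:thm:frobenius}, which is at least of order $K^2(N/n)^3 Np^*$. Since $N/n\ge 1$ and $K\ge 1$, the observed-block contribution is absorbed into the missing-block contribution, and we conclude
\begin{equation*}
    \norm{\hat{\M P}-\M P}_F \le C\sqrt{K}\left(\left(\frac{N}{n}\right)^{3/2}\sqrt{KNp^*} + \frac{N^2}{n^2}\log n\right)
\end{equation*}
with high probability. The main obstacle is essentially none---the only mild subtlety is verifying that the two new ingredients (Eckart--Young applied to the rectangular matrix $\M A_{\mathrm{obs}}$ and the submatrix spectral bound) indeed yield an order smaller than the $\hat{\M P}_{22}$ error, so that no new dominating term appears. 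Everything else reduces to the block decomposition of the Frobenius norm and a direct appeal to Theorem~\ref{A:thm:frobenius}.
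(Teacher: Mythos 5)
Your proposal is correct and follows essentially the same route as the paper's proof: the same block decomposition of $\norm{\hat{\M P}-\M P}_F^2$, the same reduction of the observed blocks to $\norm{\tilde{\M P}_{\mathrm{obs}}-\M P_{\mathrm{obs}}}$ bounded by $2\norm{\M A-\M P}\le C\sqrt{Np^*}$ via rank-$K$ optimality, the same rank-$2K$ spectral-to-Frobenius conversion, and the same observation that the resulting $KNp^*$ term is dominated by the $\hat{\M P}_{22}$ error from Theorem~\ref{A:thm:frobenius}. The only cosmetic difference is that you invoke Eckart--Young optimality directly, whereas the paper routes the same factor of $2$ through singular values of $\M A_{\mathrm{obs}}$ and Weyl's inequality.
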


\begin{proof}[Proof of Corollary~\ref{A:coro:full}]
\begin{align*}
\norm{\hat{\M{P}} - \M P}_F^2 &= \norm{\hat{\M P}_{11}-\M P_{11}}_F^2+2 \norm{\hat{\M P}_{12}-\M P_{12}}_F^2 + \norm{\hat{\M P}_{22}-\M P_{22}}_F^2\\
&\le 2\norm{\hat{\M P}_{11}-\M P_{11}}_F^2+2 \norm{\hat{\M P}_{12}-\M P_{12}}_F^2 + \norm{\hat{\M P}_{22}-\M P_{22}}_F^2\\
 &\le 2\norm{\hat{\M P}_{\mathrm{obs}}-\M P_{\mathrm{obs}}}_F^2+ \norm{\hat{\M P}_{22}-\M P_{22}}_F^2\\
 &\le 2K\norm{\hat{\M P}_{\mathrm{obs}}-\M P_{\mathrm{obs}}}^2+ \norm{\hat{\M P}_{22}-\M P_{22}}_F^2.
\end{align*}

For the first term, under the same high probability event of Theorem~\ref{A:thm:frobenius}, we have 
\begin{align*}
\norm{\hat{\M P}_{\mathrm{obs}}-\M P_{\mathrm{obs}}} & \le \norm{\hat{\M P}_{\mathrm{obs}}-\M A_{\mathrm{obs}}} + \norm{{\M P}_{\mathrm{obs}}-\M A_{\mathrm{obs}}} \\
& \le \sigma_{K}(\M A_{\mathrm{obs}}) + \norm{{\M P}_{\mathrm{obs}}-\M A_{\mathrm{obs}}} \\
& \le \norm{{\M P}_{\mathrm{obs}}-\M A_{\mathrm{obs}}} + \sigma_{K}(\M P_{\mathrm{obs}}) + \norm{{\M P}_{\mathrm{obs}}-\M A_{\mathrm{obs}}}\\
& \le 2\norm{{\M P}-\M A} \\
& \le C\sqrt{Np^*}.
\end{align*}
Combining this result with Theorem~\ref{A:thm:frobenius}, we have
$$\norm{\hat{\M{P}} - \M P}_F^2 \le C\left( KNp^* + K^2\left(\frac{N}{n}\right)^{3}Np^* + K\left(\frac{N}{n}\right)^{4}\log^2{n}\right) \le C'\left( K^2\left(\frac{N}{n}\right)^{3}Np^* + K\left(\frac{N}{n}\right)^{4}\log^2{n}\right)$$
So we have
$$\norm{\hat{\M P}-\M P}_F \le C\sqrt{K}\left(\left(\frac{N}{n}\right)^{3/2}\sqrt{KNp^*} + \frac{N^2}{n^2}\log n\right)$$
under the event. From the proof, it can also be seen that the major error for the full matrix estimation is still on the unobserved component $\hat{\M P}_{22}$.

Finally, notice that if even the truncation to $[0,1]$ is applied, this process would not increase the error at all entries of $\hat{\M P}$, so the error bound still holds.

\end{proof}

\end{document}